\documentclass[submission,copyright]{eptcs}

\newcommand{\lstar}{\mbox{$\textsf{L}^\star$}\xspace}
\newcommand{\nlstar}{\mbox{$\textsf{nL}^\star$}\xspace}
\newcommand{\emptystring}{\epsilon}
\renewcommand{\ll}{\langle\!\!\langle}
\renewcommand{\gg}{\rangle\!\!\rangle}
\newcommand{\length}[1]{|#1|}
\newcommand{\languageof}[1]{\mathcal{L}(#1)} 
\newcommand{\lofm}[1]{\mathfrak{L}(#1)} 

\newcommand{\row}{\textit{row}}
\newcommand{\ce}[2]{\chi(#2,#1)}
\newcommand{\reg}[1]{\|#1\|}
\newcommand{\Mreg}[1]{\|#1\|_M}
\newcommand{\Otable}{(S,E,T,{\Alp_n})}
\newcommand{\nOtable}[1][{}]{n-observation  table{#1}\xspace}
\newcommand{\labelsinT}{\{0, 1, P, \bot\}} 
\newcommand{\state}[1]{(\row(#1), \reg{#1})} 
\newcommand{\Nom}{\mathcal{N}}
\newcommand{\Nat}{\mathbb{N}}
\newcommand{\Alp}{\mathbb{A}}
\newcommand{\Nf}{\Nom_\mathit{fin}}
\newcommand{\aname}[1][n]{\mathsf{#1}}


\usepackage[utf8]{inputenc}
\usepackage{xargs}
\usepackage{xspace}

\usepackage[nomargin,multiuser,inline,draft]{fixme} 
\fxusetheme{color}

\usepackage{rotating}

\usepackage{caption}                
\usepackage{cmll} 
\usepackage{colonequals}            
\usepackage{environ}                
\usepackage{etex,etoolbox}          
\usepackage{graphicx}               


\usepackage[protrusion=true,expansion=true]{microtype} 
\usepackage{multirow}               
\usepackage{nicefrac}
\usepackage{proof}                  
\usepackage{bussproofs}
\EnableBpAbbreviations

\usepackage{rotating}               
\usepackage{subcaption}             
\usepackage{thmtools, thm-restate, thm-autoref}  
\usepackage{xifthen}                
\usepackage{appendix}               

\usepackage{stmaryrd}
 \usepackage{pdfsync}

\usepackage[usenames,dvipsnames,svgnames,table]{xcolor} 
\usepackage{tikz}                   
\usetikzlibrary{shadows,arrows,shapes,automata,positioning,decorations.markings}

\usepackage{amsmath,amssymb,amsthm,amsbsy}
\usepackage{mathtools} 
\newtheorem{definition}{Definition}
\newtheorem{theorem}{Theorem}
\newtheorem{lemma}{Lemma}
\newtheorem{example}{Example}

\usepackage{hyperref}
\usepackage[capitalise]{cleveref}

\DeclareMathSizes{10}{10}{6}{6}

\usepackage{csvsimple}




\usepackage{xargs}



\newif\ifemi

\DeclareGraphicsExtensions{%
    .png,.PNG,%
    .pdf,.PDF,%
    .jpg,.mps,.jpeg,.jbig2,.jb2,.JPG,.JPEG,.JBIG2,.JB2}

\usepackage[UKenglish]{babel}
  
\usepackage{fixme}
\fxusetheme{color}
\FXRegisterAuthor{eM}{aeM}{\color{orange} {\underline{eM}}}

\usepackage[normalem]{ulem} 

\usepackage{xifthen}        
\newcommand{\ifempty}[3]{%
  \ifthenelse{\isempty{#1}}{#2}{#3}%
}

\newcommand{\hidden}[1]{}
\newcommand{\hide}[1]{}

\newcommand{\cf}[2]{
  \fontsize{#1}{#1}{\selectfont{#2}}
}
\ifemi
\usepackage{showlabels}

\newcommand{\emi}[2]{
  \marginpar{\fcolorbox{red}{shadecolor}{\cf{#1}{{#2}}}}
}
\newcommand{\emic}[2]{\par
  \fcolorbox{red}{shadecolor}{\parbox{\linewidth}{ 
      \color{gray}
      \begin{description}
      \item[{\color{blue} #2}]{\sf #1}
      \end{description}}}
}
\else
\newcommand{\emi}[2]{}
\newcommand{\emic}[2]{}{}
\fi



\newcommand{\sst}{\;\big|\;}
\newcommand{\qst}{\;\colon\;} 

\newcommand{\emptyword}{\varepsilon}



{\bfseries}{\rmfamily}
{\bfseries}{\rmfamily}





\newcommand{\quo}[1]{\lq\lq {#1}\rq\rq}
\def\finex{{\unskip\nobreak\hfil
\penalty50\hskip1em\null\nobreak\hfil$\diamond$
\parfillskip=0pt\finalhyphendemerits=0\endgraf}}


\newcommand{\grmeq}{\; ::= \;}
\newcommand{\grmor}{\; \big| \;}

\definecolor{shadecolor}{rgb}{1,0.99,0.9}
\definecolor{bg}{rgb}{0.95,0.95,0.95}



\usepackage{algorithm}
\usepackage[end]{algpseudocode}

\title{On Learning Nominal Automata with Binders}
\author{Yi Xiao
  \institute{Department of Informatics\\
    University of Leicester (UK)
  }
  \email{yx104@leicester.ac.uk}
  \and
  Emilio Tuosto
  \institute{Gran Sasso Science Institute (IT) and
  \\ Department of Informatics, University of Leicester (UK)}
\email{emilio.tuosto@gssi.it}
}

\begin{document}
\maketitle
\begin{abstract}
  We investigate a learning algorithm in the context of \emph{nominal
    automata}, an extension of classical automata to alphabets
  featuring names.
  This class of automata captures \emph{nominal regular languages};
  analogously to the classical language theory, nominal automata have
  been shown to characterise \emph{nominal regular expressions with
    binders}. These formalisms are amenable to abstract modelling
  resource-aware computations.
  
  We propose a learning algorithm on nominal regular languages with
  binders. Our algorithm generalises Angluin's $L^\star$ algorithm
  with respect to nominal regular languages with binders. We show the
  correctness and study the theoretical complexity of our algorithm.
\end{abstract}

\section{Introduction}\label{sec:intro}
This paper combines \emph{nominal languages} and \emph{learning
  automata} to abstractly model computations connected with
\emph{resource awareness}.
Here, we do not restrict ourselves to a specific type of resources;
rather we think of resources in a very abstract and general sense.
We use \emph{names} as models of resources and (abstract) operations
on names as developed in \emph{nominal} languages (see \cref{sec:bkg}
for an overview) as mechanisms to capture basic properties of
resources; in particular we focus on the \emph{dynamic allocation and
  deallocation} of resources.
More precisely, we take inspiration from binders with dynamic scoping
of nominal languages in an operational context based on finite state
\emph{nominal} automata.
The states of these automata have transitions to explicitly (i)
allocate names, corresponding to \emph{scope extrusion} of nominal
languages, and (ii) to deallocate names corresponding to \emph{garbage
  collection} of (unused) names.  Our theory sets in the context of
\emph{nominal regular expressions} that transfer the traditional
Kleene theorem to the nominal framework adopted here.
In fact, the class of nominal languages that we consider can be
characterised as those accepted by nominal automata or,
equivalently, that can be generated by \emph{nominal regular
  expressions}.
The latter algebraic presentation (that we borrow from the literature
and review in \cref{sec:bkg}) features, besides the usual
operations of regular expressions (union, concatenation, Kleene-star),
a name binding mechanism and a special \emph{resource-aware}
complementation operation.
Our results rely on the closure properties of these class of languages
that has been already demonstrated in the literature.

In this context, we develop a learning algorithm for nominal automata.
We take inspiration from the \lstar algorithm of Angluin (also
reviewed in \cref{sec:bkg}).
As we will see, the design of the algorithm requires some ingenuity
and opens up the possibility of interesting investigations due to
richer structure brought in by names and name binding.


\paragraph{Nominal languages and learning}

%
The pioneering work on languages on infinite alphabet is~\cite{kami-fran:TCS94}.  And, 
the theory of nominal languages has been advocated as a suitable
abstraction for computations \emph{with resources} emerging from the
so-called nominal calculi which bred after the seminal work
introducing the $\pi$-calculus~\cite{mpw92,mil99,sw02}.
Abstract theories capturing the computational phenomena in this
context have been developed in~\cite{gp99,gp02,gabbay:thesis} in
parallel with a theory of nominal automata~\cite{mp00,fmp02,pis99}.
The formal connections between these theories have been unveiled
in~\cite{GadducciMM06}.
Later,~\cite{Kurz0T12} proposed the notion of
nominal regular languages and the use of nominal automata as acceptors
of such languages.
As observed in~\cite{Kurz0T12,BojanczykKL14} are not suitable to
handle name binding as registers are 'global'; the nominal model
in~\cite{BojanczykKL14} is instead closer (see also the comment below
Example 11.4 of~\cite{BojanczykKL14}) to history dependent
automata~\cite{pis99}, which are also the inspiration for the model of
automata in ~\cite{Kurz0T12}. The nominal automata
in~\cite{BojanczykKL14} are (abstractions of) deterministic
HD-automata (which can be seen as 'implementation' of orbit-finite
nominal automata following the connection between nominal and named
set of~\cite{ GadducciMM06}).
This class of automata is more expressive than the classes of automata
capturing nominal regular expressions as ours or nominal
Kleene algebras~\cite{KozenMP015}.
In fact, as noted in~\cite{KozenMP015} this automata accept languages
with words having arbitrarely deep nesting of binders.
However, orbit-finite nominal automata are not closed under any
reasonable notion of complementation~\cite{BojanczykKL14}. Note that
the resource-sensitive complementation operation of~\cite{Kurz0T12}
is essential in our context. On the other hand, the use of symmetries
to capture binding offers a more flexible mechanism to express
patterns or words that escape the constraints that the use of 'nested
scoping' imposes in our language.

A main motivation for this proposal is the abstract characterisation
of basic features of computations \emph{with resources}.
For instance, nominal automata have applications to the verification
of protocols and systems~\cite{ffgmpr97,fgmpr98}.
Other approaches to verify resource-aware computations have also been
based on automata models~\cite{BartolettiDFZ09,dfm12,dfm13} employ
usage automata (UA) to express and model check patterns of
resource-usage.
A distinguishing feature of the approach in~\cite{Kurz0T12,Kurz0T2012}
is that allocation and deallocation of resources is abstracted away
with \emph{binders}.
Inspired by the \emph{scope extrusion} mechanism of the
$\pi$-calculus, the allocation of a resource corresponds to an
(explicit) operation that introduces a fresh name; likewise, the
deallocation of a resource corresponds to an (explicit) operation to
\quo{free} names.
We illustrate this idea with an example.
Consider the following expression
\begin{align*}
  \hat E & = \langle n. \langle m. m \rangle ^\star n \langle k. k^\star \rangle n \rangle
\end{align*}
which is a nominal regular expression where $n, m, k$ are names,
$\_^\star$ is the usual Kleene-star operation, and subexpressions of
the form $\langle n . E \rangle$ represent the binding mechanism
whereby name $n$ is bound (that is ``local'') to expression $E$.
Intuitively, $\hat E$ describes a language of words starting with the allocation of a freshly generated
name, conventionally denoted $n$, followed by the words generated by
the subexpression $\langle m. m \rangle ^\star$ post-fixed by
$n$, and so on.
Note that in $\hat E$ name $m$ occurs in a nested binder for name $n$.
According to~\cite{Kurz0T12}, $\hat E$ corresponds to the following
nominal automaton:
\[
  \begin{tikzpicture}[>=stealth',shorten >=1pt,auto,node distance=20mm]
    \node[initial,state,inner sep=1pt,minimum size=0pt] (q0)      {$q_0$};
    \node[state,inner sep=1pt,minimum size=0pt]         (q2) [above of=q0]  {$q_2$};
    \node[state,inner sep=1pt,minimum size=0pt]         (q3) [above of=q2]  {$q_3$};
    \node[state,inner sep=1pt,minimum size=0pt]         (q5) [right of = q2, xshift=1cm]  {$q_5$};
    \node[state,inner sep=1pt,minimum size=0pt]         (q6) [above of=q5]  {$q_6$};
    \node[state,inner sep=1pt,minimum size=0pt]         (q7) [right of=q5, xshift=1cm]  {$q_7$};
    \node[state,accepting,inner sep=1pt,minimum size=0pt]         (q4) [below of=q7]  {$q_4$};
    \path[->] (q0)  edge [bend left] node {$\ll$} (q2)	
    (q2) edge node {$1$} (q5)
    edge [ bend left] node {$\ll$} (q3)
    (q3) edge [loop above] node {$2$} (q3)
    edge[ bend left]  node {$\gg$} (q2)
    (q5) edge[bend left]  node {$\ll$} (q6)
    edge  node {$1$} (q7)
    (q6) edge [loop above] node {$2$} (q6)
    edge [bend left] node {$\gg$} (q5)
    (q7) edge [bend left] node {$\gg$} (q4);
  \end{tikzpicture}
\]
which from the initial state $q_0$ allocates a fresh name through the
transition labelled $\ll$ to state $q_2$.
Notice how bound names are rendered in the nominal automaton: they are
concretely represented as (strictly positive) natural numbers.
This allows us to abstract away from the identities of bound names.
In fact, the identity of bound names is immaterial and can be
\emph{alpha-converted}, that is replace with any other name provided
that the name has not been used already.
The use of numbers enables a simple ``implementation'' of
alpha-conversion.
More precisely, think of numbers as being addresses of registers
of states.
For instance, $q_3$ has 2 registers addressed by $1$ and $2$
respectively.
Then the self-loop transition in state $q_3$ can consume any name $n$,
provided that $n$ is different than the name (currently) stored in
register $1$.
Finally, note that the content of registers is local to states; once
a deallocation transition 
$\gg$ is fired, the content in last allocated register is disregarded.

For a practical example, we consider a scenario based on servers to
show how nominal regular languages with binders can suitably specify
usage policies of servers $S_1,\dots ,S_k$, such that,
$\forall 1\leq h\leq k\ S_h$ offers operations
$\{o_{h_1},\dots , o_{h_k}\}=O_h$. Given an alphabet
\begin{displaymath}
  \Sigma=\bigcup_{h=1}^{k} \{l_{i_h},l_{o_h}\}\cup O_h
  \qquad \text{where symbols $l_{i_h},l_{o_h}$ represent basic input and output activities}
\end{displaymath}
\[
  \text{consider the regular expressions on $\Sigma$}\qquad\qquad
E_h=(l_{i_h}\langle s. e_h l_{o_h}\rangle)^* \qquad e_h=(\sum_{o\in O_h}o.e_o)^*
\]
Name $s$ is a fresh session identifier which the symbol $\langle$
allocates when the session starts and the symbol $\rangle$ deallocates
when the session ends. Note that $s$ may occur in $e_h$ to e.g., avoid
re-authentication. Resources (activities, sessions, operations, etc.)
can be abstracted as letters and names, and the (de)allocation
represent the binding and freshness conditions. Intuitively, we give
the following nominal regular expressions with
$\Sigma=\bigcup_{h=1}^{k} \{l_{i_h},l_{o_h}\}\cup
\{\text{readFead},\text{updateProfile}\}$ and $n_1 \neq n_2$ be
distinct names. Operations $\text{readFeed}$, $\text{updateProfile}$
allow users read a feed and to update a profile.
\[
    E_1=(l_{i_1}\langle n_1. e_1 l_{o_1}\rangle)^* 
    \qquad
    e_1=(n\ \text{readFeed} + \text{updateProfile}\  E_2)^*
    \qquad
    E_2=(l_{i_2}\langle n_2. e_2 l_{o_2}\rangle)^*
\]
From the above equations, we see clearly the binders delimit the scope
of $n_1$ and $n_2$. And, $n_2$ is nested in $n_1$.
Intuitively, the approach above relaxes the condition of classical
language theory that the alphabet of a language is constant.
Binder allow us to extend the alphabet ``dynamically''.
This is strongly related to other approaches in the literature, where
languages over infinite alphabets are considered.
A form of regular expressions, called UB-expressions, for languages on
infinite alphabets investigated in~\cite{KaminskiT06}.
In~\cite{segoufin2006} pebble automata are compared to register
automata.
This class of languages are not suitable for our purposes as they do
not account for freshness.

Alternative approaches investigating languages over infinite alphabets
are those in~\cite{BojanczykKL14,SchroderKMW17}.
A finite representation of nominal sets and automata with data
symmetries and permutations is given in~\cite{BojanczykKL14};
this presentation differs from the one in~\cite{Kurz0T2012}.
Using the equivalence in~\cite{GadducciMM06}, the nominal regular
expressions with binders of~\cite{Kurz0T2012} can be transferred into
the context as the style of~\cite{BojanczykKL14}. In fact,
permutations permit to encode name binding and give an implicit
representation of name scoping in nominal words. On the other hand,
\cite{SchroderKMW17} defines \emph{bar strings} and considers another
representation of nominal sets and automata with binders, regular
expressions, and non-deterministic nominal automata over them.
An example of bar string with names $a$, $b$, and $c$ is $ab|ccb$
that represents a word where name $c$ following $a$ is bound in the
rest of the string.
A key observation is that in~\cite{BojanczykKL14} the scope of binders
models load freshness and it is fixed: once stated, the scope extends as far as possible ``to the
right''. In our setting, this would account to allocate a resource and never deallocate, that is, $|b$ in~\cite{BojanczykKL14} corresponds to the
notation $\ll b. b$ of~\cite{Kurz0T2012} where the angled bracket
opens the scope of the binder restricting the occurrences of name $b$
after the dot symbol; in this notation $\gg$ are used to close the
scope opened by $\ll$. To illustrate the differences
from~\cite{Kurz0T2012}, we consider the example under local freshness
semantics \cite[p. 5]{BojanczykKL14}. Let $\mathbb{A}$ be a set of
names, and $a$, $b$, and $c$ be names in $\mathbb{A}$, $\{|a|b,|a|a\}$
is alpha-equivalent to $\{|a|a\}$, unlike in~\cite{Kurz0T2012}
where is $\{\ll a.a\gg\ll b.b\gg,\ll a.a\gg\ll a.a\gg\}$ is
alpha-equivalent to $\{\ll a.a\gg\ll b.b\gg\}$ while
$\{\ll a.a\ll b.b\gg\gg,\ll a.a\ll a.a\gg\gg\}$ is alpha-equivalent to
$\{\ll a.a\ll b.b\gg\gg\}$. Thus, without the closing scope, the two
kinds of context are not easily transferred to each other.
Bar string $|a|ba$ could be corresponding to $\ll a.a \ll b.b\gg a\gg$
or $\ll a.a \ll b.ba\gg \gg$ under open conditions. However, if
$|a|ba$ is a word in the language
$\{cdc\in \mathbb{A}^3\mid c\neq d\}$, $|a|ba$ is corresponding to
$\ll a.a \ll b.b\gg a\gg$.
One of the most known and used learning algorithm is $\lstar$ introduced by
Angluin~\cite{Angluin87}.
As surveyed in \cref{sec:bkg}, given a regular language, $\lstar$
creates a deterministic automaton that accepts
the language.
This is done by mimicking the \quo{dialogue} between a \emph{learner}
and a \emph{teacher}; the former poses questions about the language to
the latter. In \lstar there are two types of queries the learner can
ask the teacher: \emph{membership} queries allow the learner to check
whether a word belongs to the input language while with
\emph{equivalence} queries the learner checks if an automaton accepts
or not the language.
The automaton is \quo{guessed} by the learner according to the answers
the teacher provides to queries.
The outcome of an equivalence query may be a \emph{counterexample}
selected by the teacher to exhibit that the automaton does not accept
the language.

The $\lstar$ algorithm has been extended to several classes of
languages~\cite{LCAfromMSCs,Niese2003,PasareanuGBCB08}.
Using a categorical approach, the $\lstar$ algorithm has been
generalised to other classes of automata such as Moore and
Mealy~\cite{LAinCP}.
An interesting line of research is the one explored
in~\cite{LCAfromMSCs} which applies learning automata to distributed
systems based on message-passing communications to learn communicating
finite-state machines~\cite{bz83} from message-sequence charts.
Applications of learning automata are in~\cite{PasareanuGBCB08}
and~\cite{Niese2003}.
The former defines a framework based on $\lstar$ to fully automatise
an incremental assume-guarantee verification technique and the latter
proposes an optimised approach for integrated testing of complex
systems.



  Variants of Angluin’s algorithm for languages
  over infinite alphabets have attracted researchers'
  attention.
  An $\lstar$ algoritm for register automata is given
  in~\cite{BolligHLM13} where so-called \emph{session automata}
  support the notion of fresh data values. Session automata are
  defined over pairs of finite-infinite alphabets.
  %
  %
  Interestingly, session automata also have a canonical form to decide
  equivalence queries.

  Like~\cite{BolligHLM13},~\cite{CasselHJS16} works on register
  automata and data language but the latter aims to
  the application of dynamic black-box analysis.
  The key point is that~\cite{CasselHJS16} uses a tree queries instead
  of membership queries and ensures the observation tables closeness
  and register-consistency. Further,~\cite{CasselHJS16} defines a new
  version of equivalence to achieve the correctness and termination of
  the algorithm.

  Recently,~\cite{Moerman17} proposes a learning algorithm which
  extends $\lstar$ to nominal automata. The main difference between
  our approach and the one in~\cite{Moerman17} is the representation
  of nominal languages and nominal automata. Language theories for
  infinite alphabets~\cite{BojanczykKL14,Pitts15} are used
  in~\cite{Moerman17}, handling names through finitely-supported
  permutations. Accordingly, in~\cite{Moerman17} observation tables
  and the states of nominal automata are orbit-finite. Another
  difference is the operation on counterexamples. Unlike in
  $\lstar$,~\cite{Moerman17} adds the counterexamples into columns. It
  is an interesting research direction to optimise our work on the
  operations of the counterexamples in the future.

\paragraph{Main contributions}
Our main objective is to develop a learning algorithm for nominal
languages, with a focus on resources.
The baseline to tackle this
objective is the use of binders and languages on infinite
alphabets.
\cref{sec:nla} collects our main results.

Our first achievement is the design of a learning algorithm that
generalises Angluin's \lstar\ algorithm to nominal regular languages
with binders.
We call our algorithm $\nlstar$ (after \emph{nominal} \lstar), as a
tribute to Angluin's work.
This is attained by retaining the basic scheme of $\lstar$ (query /
response dialogue been a learner and a teacher) and ideas of $\lstar$
(the represenation of a finite state automaton with specific a
\emph{observation table}).
Technically, this requires a revision of the main concepts of Angluin's
theory.
In particular, the type of queries and answers now have to account for
names and the allocation and deallocation operations on them.
Consequently, we have to reconsider the data structure to represent
observation tables and hence the notions of \emph{closedness} and
\emph{consistency}.

Interestingly, this revision culminates in
\cref{thm:nTabledeltastar}
and  \cref{thm:correct} respectively showing how the learned nominal automata
associated behave and  the correctness of \nlstar.
Finally, we discuss the complexity of \nlstar.


\section{Background}\label{sec:bkg}
We survey the principal concepts need in the rest of the paper.  In
particular, we review basics of formal language theory, its nominal
counterpart, and the learning algorithm of Angluin's \lstar~\cite{Angluin87}.

\subsection{Regular Languages}
Regular Languages, regular expressions and finite automata have a
well-known relationship established by the Kleene
theorem~\cite{kle56}.
A regular language can be represented by regular expressions and
accepted by a finite state automaton. In this section, we introduce
necessary notions and definitions for these concepts.

An \emph{alphabet} is a set (whose elements are often called
\emph{letters} or \emph{symbols}).
We denote a finite alphabet as $\Sigma$.
A \emph{word} is a sequence of symbols of an alphabet. Let $w$ be a
word, we denote the length of $w$ as $\length{w}$. The word of
length zero is called \emph{empty word} and denoted by
$\emptyword$. The concatenation of two words is denoted as
$\_\cdot\_$.
We define $\Sigma^\star=\bigcup_{n=0}^{\infty}\Sigma^n$, where
$\Sigma^0=\{\emptyword\}$ and for each
$n>0,\ \Sigma^n=\{w \cdot w' \sst w\in \Sigma\ and\ w'\in
\Sigma^{n-1}\}$. A \emph{language} $L$ is a set of words over an
alphabet $\Sigma$, that is, $L\subseteq\Sigma^\star$.

Other language operations we use are concatenation, union,
\emph{Kleene-star} and complementation. Assuming that $L\ and\ L'$ are
languages over $\Sigma$, we have the following standard definitions:
\begin{itemize}
\item concatenation $L\cdot L'=\{w\cdot w' \sst w\in L \ and \ w'\in L'\}$,
\item union $L\cup L'=\{w \sst w\in L \ or \ w\in L'\}$,
\item \emph{Kleene-star} $L^\star=\bigcup_{n=0}^{\infty}L^n=\left \{
    \begin{tabular}{cc}
      $\{\emptyword \}$ & $n=0$  \\
      $L^{n-1}\cdot L $ & $n\neq 0$  
    \end{tabular}
  \right .$,
\item complementation $L^C=\{w\in \Sigma^\star \sst w\notin L\}$.
\end{itemize}
A \emph{regular expressions over $\Sigma$} is a term derived from the
grammar where $a\in \Sigma$:
\[
  re \grmeq \epsilon \grmor \emptyset \grmor a \grmor  re+re \grmor  re \cdot re \grmor re^\star
\]
(where operators are listed in inverse order of precedence).
Given two regular expressions $re$ and $re'$, $re+re'$ denotes the
union of $re$ and $re'$, $re\cdot re'$ denotes the concatenation of
$re$ and $re'$, and $re^\star$ denotes the Kleene-star of $re$.

A \emph{finite automaton over alphabet $\Sigma$} is a five-tuple
$M=\langle Q,q_0,F,\delta\rangle$ such that $Q$ is a finite set of
states, $q_0$ is the initial state, $F\subseteq Q$ is the finite set
of final states, $\delta \subseteq Q\times \Sigma\times Q$ is a
relation from states and alphabet symbols to states.
The automaton $M$ is \emph{deterministic} when $\delta$ is a function
on $Q\times\Sigma\rightarrow Q$.
We extend the transition relation $\delta$ to $\Sigma^\star$ in the
obvious way, define the language of an automaton $M$ as usual, and
denote it as $\lofm{M}$.

It is well-known that regular expressions denote regular languages; we
let $\languageof{re}$ denote the language of a regular expression
$re$.
\begin{theorem}[\cite{Hopcroft1979}]\label{thm:minimal_unique}
  A language $L$ is regular iff there exists a finite automaton $M$
  such that $L=\lofm{M}$.
  Moreover, there exists a minimal finite automaton $M$ accepting $L$
  and $M$ is unique.
\end{theorem}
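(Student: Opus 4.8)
The plan is to establish the two parts of \cref{thm:minimal_unique} by the classical route, assuming standard constructions on automata and regular expressions.

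\paragraph{Part 1: $L$ regular iff $L = \lofm{M}$ for some finite automaton $M$.}
First I would prove the ``only if'' direction by structural induction on a regular expression $re$ with $\languageof{re} = L$. The base cases $\epsilon$, $\emptyset$, and $a \in \Sigma$ are handled by explicit two- or three-state automata. For the inductive cases, given automata $M_1, M_2$ accepting $\languageof{re_1}, \languageof{re_2}$, I would exhibit automata for $re_1 + re_2$, $re_1 \cdot re_2$, and $re_1^\star$ using the usual $\emptyword$-transition constructions (disjoint union with a fresh initial state for union; linking final states of $M_1$ to the initial state of $M_2$ for concatenation; back-edges to a fresh initial/final state for Kleene-star), then remove $\emptyword$-transitions and determinise via the subset construction so that the result is a finite automaton in the sense defined above. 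For the ``if'' direction, given a finite automaton $M = \langle Q, q_0, F, \delta\rangle$, I would build a regular expression for $\lofm{M}$ by the state-elimination method (equivalently, the Kleene/Brzozowski--McCluskey algorithm): introduce expressions $R_{ij}^{(k)}$ describing paths from state $i$ to state $j$ using only intermediate states drawn from a fixed $k$-element subset, with the recurrence $R_{ij}^{(k)} = R_{ij}^{(k-1)} + R_{ik}^{(k-1)} (R_{kk}^{(k-1)})^\star R_{kj}^{(k-1)}$, and take the sum over final states of $R_{q_0 f}^{(|Q|)}$. A routine induction on $k$ shows these expressions capture exactly the intended path-label sets, so $\lofm{M}$ is regular.

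\paragraph{Part 2: existence and uniqueness of a minimal automaton.}
For existence, I would apply the Myhill--Nerode construction: define the right-congruence $x \sim_L y$ iff for all $z \in \Sigma^\star$, $xz \in L \iff yz \in L$. Since $L$ is regular, $\sim_L$ has finitely many equivalence classes (this follows because a DFA for $L$ refines $\sim_L$, so the index of $\sim_L$ is at most the number of states). Take $Q$ to be the set of classes, $q_0 = [\emptyword]$, $F = \{[w] \sst w \in L\}$, and $\delta([w], a) = [wa]$, which is well-defined precisely because $\sim_L$ is a right-congruence; this deterministic automaton accepts $L$ and has index-of-$\sim_L$ many states. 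For minimality and uniqueness, I would show any DFA $M'$ accepting $L$ with only reachable states admits a surjective homomorphism onto this canonical automaton, mapping a state $q$ to the $\sim_L$-class of any word leading to $q$; well-definedness again uses that $M'$ recognises $L$. Hence $M'$ has at least as many states, and if $M'$ is itself minimal the homomorphism is a bijection, giving uniqueness up to isomorphism.

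\paragraph{Main obstacle.}
None of the individual arguments is deep, so the ``obstacle'' is really one of scope and bookkeeping: the finite-automaton model fixed in the excerpt is presented in deterministic-or-relational form without $\emptyword$-transitions, so the Thompson-style constructions for the regular-expression-to-automaton direction must be followed by explicit $\emptyword$-elimination and subset determinisation, and one has to be careful that the resulting object still matches the five-tuple definition given. Keeping the treatment of partial versus total transition functions consistent (adding a sink state where needed) is the fiddly part. Since \cref{thm:minimal_unique} is quoted from \cite{Hopcroft1979}, in the paper itself I would simply cite it; the sketch above records the proof one would give if it were required in full.
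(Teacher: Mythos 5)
Your proposal is correct and follows exactly the classical route (Thompson-style constructions plus state elimination for the Kleene equivalence, and Myhill--Nerode for existence and uniqueness of the minimal DFA), which is the argument in the cited source \cite{Hopcroft1979}; the paper itself gives no proof since this is a quoted background theorem. As you note yourself, in the paper one would simply cite it, and your sketch is a faithful reconstruction of the standard proof.
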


\subsection{Nominal Languages}\label{st:NL}
We use the nominal regular expressions introduced
in~\cite{Kurz0T2012,Kurz0T13}.
Languages over infinite alphabets are generalised to nominal automata
and nominal expressions~\cite{BojanczykKL14,Kurz0T2012,Kurz0T13,SchroderKMW17,segoufin2006}.
The approach in~\cite{Kurz0T2012} is distinguished by the use of names
and binders in the expressions.
In the following, we recall the basic notions first and then we survey
nominal languages~\cite{Kurz0T2012}.
Hereafter, we fix a countably infinite set of names $\Nom$.

A \emph{nominal language over $\Nom$ and $\Sigma$} is a set of nominal
words $w$ over $\Nom$ and $\Sigma$, that is terms derived by the
grammar
\[
  w \grmeq \epsilon \grmor a \grmor \aname \grmor w\cdot w \grmor \ll \aname.w\gg
  \qquad
  \text{where } \aname \in \Nom \text{ and } a \in \Sigma
\]
A name is \emph{bound} in a word when it occurs in the scope of a
binder.
Occurrences of names not bound are called \emph{free}.
For example, $\aname$ is bound in word
$\ll \aname. \aname \, a \gg \, \aname[m]$ while $\aname[m]$ is free.

A \emph{nominal regular expression} is a term derived from the grammar
\[
  ne \grmeq \epsilon \grmor \emptyset \grmor a \grmor \aname \grmor ne+ne  \grmor ne \cdot ne \grmor ne^\star \grmor \langle \aname.ne \rangle
  \qquad
  \text{where } \aname \in \Nom \text{ and } a \in \Sigma
\]
In nominal expressions, binders are represented as
$\langle \aname.\_ \rangle$ for $\aname \in \Nom$.
If the names in a nominal expression are all bound, the nominal
expression is \emph{closed}.
Nominal regular expressions denote \emph{nominal languages}.
\begin{definition}\cite{Kurz0T2012}\label{df:neTolanguage}
  The \emph{nominal language $\languageof{ne}$ of a nominal regular
    expression $ne$} is defined as
  \begin{itemize}
  \item $ \languageof{\emptyword}=\{\emptyword\} \qquad \languageof{\emptyset}=\emptyset \qquad \languageof{a}=\{a\} \qquad \languageof{\aname}=\{\aname\}$
  \item $\languageof{ne_1+ne_2}=\languageof{ne_1}\cup \languageof{ne_2}$
  \item $\languageof{ne_1\cdot ne_2}=\languageof{ne_1}\cdot \languageof{ne_2}=\{w\cdot v \sst w\in \languageof{ne_1},v\in \languageof{ne_2}\}$
  \item $\languageof{ne^\star}=\bigcup\limits_{k\in \Nat} \languageof{ne}^k$, where $\languageof{ne}^k= 
      \left \{
        \begin{tabular}{cc}
          $\{\epsilon\}$ & $k=0$  \\
          $\languageof{ne}\cdot \languageof{ne}^{k-1} $ & $k\neq 0$  
        \end{tabular}
      \right .
      $
    \item $\languageof{\langle \aname . ne \rangle}=\{\ll \aname.w\gg \sst w\in \languageof{ne}\}$.
  \end{itemize}
\end{definition}
The closure properties of nomianl regular languages are stated below:
\begin{theorem}~\cite{Kurz0T2012}\label{thm:closureprops}
  Nominal regular languages are closed under union, intersection, and \emph{resource sensitive} complementation. 
\end{theorem}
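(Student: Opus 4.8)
The plan is to reduce to the automata side via the nominal Kleene theorem of~\cite{Kurz0T2012}: nominal regular languages are exactly those accepted by finite nominal automata, so it suffices to show that the class of such automata is effectively closed under the three operations and then read a nominal regular expression back off the resulting automaton. Closure under union needs no work on the automata at all, since $+$ is already a constructor of the grammar and $\languageof{ne_1 + ne_2} = \languageof{ne_1} \cup \languageof{ne_2}$ holds by \cref{df:neTolanguage}; I would record it only for completeness. So the real content lies in intersection and in resource-sensitive complementation.

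For intersection I would fix nominal automata $M_1$, $M_2$ with $\lofm{M_i} = \languageof{ne_i}$ and form a synchronous product $M_1 \times M_2$. A state of the product is a pair of states together with the pair of their register contents; a letter $a \in \Sigma$ is consumed by both components at once; a name is consumed by both at once, the joint move being allowed only when it meets the freshness side-conditions of each component; an allocation transition $\ll$ of the product fires exactly when both components perform $\ll$ (pushing one new register on each side), and symmetrically for $\gg$ (if the two components disagree on the next symbol the product has no move). Runs of $M_1 \times M_2$ are then in bijection with pairs of runs of $M_1$ and $M_2$ over one and the same nominal word, so the product accepts $\languageof{ne_1} \cap \languageof{ne_2}$; the product is finite, and since the two factors stay synchronised on $\ll$/$\gg$, the nesting depth of its binders is the maximum of those of the factors, so it is still a bona fide nominal automaton.

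The delicate part --- and the step I expect to be the \textbf{main obstacle} --- is resource-sensitive complementation. The point of the ``resource-sensitive'' qualifier is that one cannot complement against the set of all strings over $\Sigma \cup \Nom \cup \{\ll, \gg\}$: that set is not a nominal regular language, and a naive complement would violate both alpha-equivalence and the bracket-matching discipline. Instead, given $M$ with $\lofm{M} = \languageof{ne}$, I would first determinise and complete it by a subset construction tailored to names --- the state after reading a prefix records a finite set of original configurations in which the register addresses of the various threads are identified modulo a canonical renaming, so that the automaton stays finite up to alpha-equivalence. Termination of this construction rests exactly on the fact that the nesting depth of binders in $ne$ is bounded, which is the resource-aware ingredient and is what keeps the number of ``live'' registers bounded. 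Swapping final and non-final states in the determinised automaton then yields a nominal automaton whose language is the complement of $\languageof{ne}$ relative to the nominal regular language of all well-formed words with the same allocation structure (i.e.\ relative to $\languageof{ne_\top}$ for a ``universal'' expression $ne_\top$ over $\Sigma$ of the appropriate depth), which is precisely the notion of resource-sensitive complement. Applying the Kleene theorem in the other direction to the three automata produces nominal regular expressions, which gives the claim; the real bookkeeping is in keeping the register renaming canonical during determinisation and checking that this process neither diverges nor increases the binder depth.
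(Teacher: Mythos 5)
The paper does not prove this statement: \cref{thm:closureprops} is imported verbatim from~\cite{Kurz0T2012}, so there is no in-paper proof to compare against. Your reconstruction follows the same route as that source (Kleene correspondence, then automata-level constructions: product for intersection, determinisation plus final-state swap for the resource-sensitive complement), and the outline is sound, including the correct identification of why the complement must be taken relative to words of bounded binder depth, matching \cref{def:rsComplementation}.

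Two remarks on the details. First, you over-engineer the determinisation: in this model every transition on $\ll$ increments $\reg{\_}$ by exactly one and every $\gg$ decrements it, so all states reachable on the same prefix sit at the same register depth, and the register valuation $\sigma$ is determined by the word itself (positions of $\ll$ and the names subsequently used); hence all nondeterministic threads share one and the same $\sigma$, a plain subset construction on states suffices, and no canonical-renaming bookkeeping is needed. The same observation disposes of the register pairing in your product: on reachable product states the two components' valuations coincide, so a single register stack serves both. Second, a detail you omit but should state: after swapping, final states of a nominal automaton are required to have $\|q\|=0$ (acceptance needs $\sigma=\emptyset$), so the complement automaton's final states are exactly the non-final states at depth $0$ of the completed deterministic automaton, and completion must add sink states at every depth from $0$ to $\theta(ne)$, linked by $\ll$ and $\gg$, to realise the full universe of legal words of bounded depth. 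With those points made explicit, the argument is the standard one.
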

The main difference with respect to classical regular expressions is
on complementation.
Since, the complement of $\languageof{ne}$ is not a nominal regular
language, the classical complementation does not work in the nominal
case.
Therefore,~\cite{Kurz0T2012} give the following definition:
\begin{definition}~\cite{Kurz0T2012}\label{def:rsComplementation} Let
  $ne$ be a nominal regular expression.
  The \emph{resource sensitive complement of $\languageof{ne}$} is the
  set $\{w \notin \languageof{ne} \sst \theta(w) \leq \theta(ne)\}$
  where
  \begin{itemize}
  \item
    $ne \in \{\emptyword,\emptyset\} \cup \Nom \cup \Sigma \implies
    \theta(ne) =0$
  \item $ne=ne_1+ne_2$ or
    $ne_1\cdot ne_2 \implies \theta(ne) =max(\theta(ne_1) , \theta(ne_2) )$
  \item $ne = \langle \aname. ne \rangle \implies 1 + \theta(ne)$
  \item $ne= ne^\star \implies \theta(ne) $.
  \end{itemize}
  and the depth $\theta$ of a word is defined as the depth the corresponding
  expression.
\end{definition}

We now define the notions of nominal automata adopted here. Let
$\Nat$ be the set of natural numbers and define
$\underline{n}=\{1,\cdots,n\}$ for each $n\in \Nat$. Considering
a set of states $Q$ paired with a map $\|\_\|: Q\rightarrow \Nat$,
let us define the local registers of $q\in Q$ to be
$\underline{\|q\|}$. We use a definition of nominal
automata~\cite{Kurz0T12} as Definition~\ref{def:NA}. Moreover, we
describe how to allocate names via maps
$\sigma:\underline{\|q\|}\rightarrow \Nom$.

\begin{definition}[Nominal Automata~\cite{Kurz0T12}]\label{def:NA}
  Let $\Nf \subset \Nom$ be a finite set of names. A
  \emph{nominal automaton with binders over $\Sigma$ and $\Nf$},
  $(\Sigma, \Nf)$-automaton for short, is a tuple
  $M=\langle Q,q_0,F,\delta\rangle$ such that
  \begin{itemize}
  \item $Q$ is a finite set of states equipped with a map $\|\_\|: Q\rightarrow \Nat$
  \item $q_0$ is the initial state and $\|q_0\|=0$
  \item $F$ is the finite set of final states and $\|q\|=0$ for each $q\in F$
  \item for each $q\in Q\ and\ \alpha\in \Sigma\cup\Nf \cup{\{\emptyword,\ll,\gg\}}$, we have a set $\delta(q, \alpha)\subseteq Q$ such that for all $q'\in \delta(q, \alpha)$ must hold:
      \begin{itemize}
      \item $ \alpha =\ll$ $\implies \ \|q'\|=\|q\|+1$
      \item $ \alpha=\gg$ $\implies \ \|q'\|=\|q\|-1$
      \item otherwise $\implies \ \|q'\|=\|q\|$
      \end{itemize}
      A transition is a triple $(q, \alpha,q')$ such that $q'\in \delta(q, \alpha)$.
  \end{itemize}
\end{definition}
A nominal automaton $M$ is \emph{deterministic} if, for each $q\in Q$,
\[
  \left\{ \begin{array}{ll}
            \textrm{$| \delta(q, \alpha)| =0$,} & \textrm{if ($ \alpha =\ll$ and $\|q\|=max\{\|q'\|\ |\ q'\in Q\}$) or ($ \alpha =\gg$ and $\|q\|=0$)}\\
            \textrm{$| \delta(q, \alpha)| =1$,} & \textrm{otherwise}
          \end{array} \right..
\]
Let $M=\langle Q,q_0,F,\delta\rangle$ be a nominal automata over
$\Sigma$ and $\Nf$, we denote the image of a map $\sigma$ by
$Im(\sigma)$ and the empty map by $\emptyset$. Let $q$ be a state, $w$
be a word whose free names are in $\Nf\cup Im(\sigma)$ and
$\sigma:\underline{\|q\|}\rightarrow \Nom$ be a map, a
configuration of $M$ is denoted by $\langle q,w,\sigma\rangle$. A
configuration $\langle q,w,\sigma\rangle$ is \textit{initial} if
$q=q_0$, $w$ is a word whose free names are in $\Nf$, and
$\sigma=\emptyset$; a configuration $\langle q,w,\sigma\rangle$ is
accepting if $q\in F$, $w=\emptyword$, and $\sigma=\emptyset$.
  Given $q,q'\in Q$ and two configurations
  $t=\langle q,w,\sigma\rangle$ and
  $t'=\langle q',w',\sigma' \rangle$, $M$ \emph{moves from $t$ to
    $t'$} if there is
  $s \in \Sigma \cup \Nom \cup \{\emptyword,\ll,\gg\} \cup \Nat$
  such that $q'\in \delta(q,s)$ and
  \[\begin{cases}
      s \in \underline{\|q\|}, & w=\sigma(s)w',\ \sigma'=\sigma \text{ and } \forall n > s \qst \sigma(s) \neq \sigma(n)
      \\
      s \in \Nf \setminus Im(\sigma) & w=aw',\ \sigma'=\sigma
      \\
      s \in \Sigma & w=aw',\ \sigma'=\sigma
      \\
      s = \emptyword & w=w',\ \sigma'=\sigma
      \\
      s = \ll & w=\ll w',\ \sigma' = \sigma \lbrack \|q'\| \mapsto n \rbrack
      \\
      s = \gg & w=\gg w', \ \sigma' = \sigma_{|_{\underline{\|q'\|}}}
    \end{cases}
  \]
  where $\sigma\lbrack\|q'\| \mapsto n\rbrack$ extends $\sigma$ by
  allocating the maximum index in $\underline{\|q\|}$ to $n$ and
  $\sigma_{|_{\underline{\|q'\|}'}}$ is restriction on
  $\underline{\|q'\|} $ of $\sigma$.
  The language accepted by $M$ is the set of nominal words $w$ such
  that $M$ moves from the initial configuration
  $\langle q,w,\sigma\rangle$ to an accepting configuration.
  (For more details see~\cite{Kurz0T12,Kurz0T13}).
  

  \begin{theorem}~\cite{Kurz0T2012}\label{thm:kleeneforNA} Every
    language recognised by a nominal automaton is representable by a
    nominal regular expression. Conversely, every language represented
    by a nominal regular expression is acceptable by a nominal
    automaton.
\end{theorem}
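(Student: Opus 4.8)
The plan is to prove the two implications separately, following the classical proof of Kleene's theorem (\cref{thm:minimal_unique}) and adding the bookkeeping needed to handle the register discipline enforced by the binders.

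\emph{From expressions to automata.} I would argue by structural induction on the nominal regular expression $ne$, after $\alpha$-converting it so that all bound names are pairwise distinct and distinct from the free names (that $\languageof{\cdot}$ is $\alpha$-invariant is immediate from \cref{df:neTolanguage}); fix $\Nf$ to be the set of free names of $ne$. The base cases $\emptyset,\epsilon,a,\aname$ are the obvious two-state automata, and $ne_1+ne_2$, $ne_1\cdot ne_2$, $ne^\star$ are the usual $\epsilon$-gluings of the automata supplied by the induction hypothesis; because initial and final states always carry register count $0$, these gluings respect the constraint $\reg{q'}=\reg{q}$ on $\epsilon$-transitions, and the result is a $(\Sigma,\Nf)$-automaton. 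The only genuinely new case is $\langle\aname.ne\rangle$: from the automaton $M_{ne}$ for $ne$ I would (i) increase $\reg{q}$ by $1$ for every state $q$ and every natural-number transition label by $1$, freeing the index $1$; (ii) relabel every $\aname$-transition by the register index $1$; (iii) add a fresh initial state $q_0$ with $\reg{q_0}=0$ and an $\ll$-transition from $q_0$ to the old initial state; (iv) add a fresh final state and a $\gg$-transition into it from each old final state, demoting the old initial and final states. Since old final states have count $0$, they acquire count $1$ and a $\gg$-transition back to count $0$, exactly as \cref{def:NA} requires; a routine induction on words then shows $\lofm{M_{ne}}=\languageof{ne}$, the binder case being precisely the observation that reading $\ll w\gg$ from a count-$0$ state amounts to allocating register $1$, running $M_{ne}$ on $w$ with index $1$ in the role of $\aname$, and deallocating.

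\emph{From automata to expressions.} Here I would induct on $d=\max\{\reg{q}\mid q\in Q\}$. If $d=0$ then $M$ has no $\ll$-, $\gg$- or register-index transitions and is an ordinary $\epsilon$-NFA over the finite alphabet $\Sigma\cup\Nf$, so by \cref{thm:minimal_unique} its language is denoted by a binder-free regular expression, which is a fortiori a nominal one. For $d>0$, call a fragment of a run a \emph{sub-computation} if it leaves a count-$0$ state by an $\ll$-transition, stays at count $\geq 1$, and terminates with the first $\gg$-transition returning to count $0$. For each ordered pair $(q,q')$ of count-$0$ states, let $N_{q,q'}$ be obtained from $M$ by keeping the states of count $\geq 1$ with every count and every register-index label decreased by $1$, relabelling the transitions that carried index $1$ by a fresh name $\aname_{q,q'}$, and taking $\delta(q,\ll)$ and $\{q''\mid q'\in\delta(q'',\gg)\}$ as initial and final states (adding $\epsilon$-edges to single fresh initial and final states as needed). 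Then $N_{q,q'}$ is a $(\Sigma,\Nf\cup\{\aname_{q,q'}\})$-automaton of maximal register depth $d-1$, so the induction hypothesis gives $ne_{q,q'}$ with $\languageof{ne_{q,q'}}=\lofm{N_{q,q'}}$, and $\langle\aname_{q,q'}.ne_{q,q'}\rangle$ then denotes exactly the words $\ll w\gg$ that $M$ can consume while moving from $q$ to $q'$ through a single sub-computation. Finally, form the depth-$0$ skeleton $M_0$ whose states are the count-$0$ states of $M$ and whose transitions are the original count-$0$ transitions over $\Sigma\cup\Nf\cup\{\epsilon\}$ together with, for each $(q,q')$, a transition from $q$ to $q'$ labelled by the fresh symbol $\langle\aname_{q,q'}.ne_{q,q'}\rangle$; treating these compound labels as atomic letters, $M_0$ is a finite automaton over a finite alphabet, so by \cref{thm:minimal_unique} $\lofm{M_0}$ is denoted by a classical regular expression, and substituting each compound letter by the nominal expression it abbreviates yields a nominal regular expression $ne$ with $\languageof{ne}=\lofm{M}$.

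The main obstacle is the second direction, and in particular the decomposition lemma underlying it: one has to check that ``the first return to count $0$'' really cuts every accepting run into well-defined, independently recognisable pieces, so that each $N_{q,q'}$ is a legitimate nominal automaton and runs of $M$ are in bijection with runs of $M_0$ paired with runs of the $N_{q,q'}$, and that uniformly decreasing or increasing all register counts and natural-number labels by $1$ preserves every side condition of \cref{def:NA} and of the move relation. By contrast the first direction is essentially Thompson's construction plus a small amount of extra care for $\ll$ and $\gg$ in the binder case.
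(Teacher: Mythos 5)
This theorem is imported from~\cite{Kurz0T2012}; the paper you are reading states it without proof, so there is no in-paper argument to compare yours against. Measured against the construction in the cited source, your sketch is essentially the standard one: a compositional (Thompson-style) translation from expressions to automata, with the binder case realised by shifting all register counts and indices up by one and wrapping the automaton in an $\ll$/$\gg$ pair, and, for the converse, an induction on the maximal register depth that peels off one layer at a time and invokes the classical Kleene theorem (\cref{thm:minimal_unique}) on the depth-$0$ skeleton over an enlarged finite alphabet of ``compound letters''. The load-bearing observation in your second direction is correct and worth making explicit as a lemma: because every state $q$ with $\reg{q}=0$ is visited only with the empty register assignment $\sigma=\emptyset$, the set of words consumable along a single excursion above count $0$ depends only on the pair of count-$0$ endpoints, which is exactly what licenses treating each excursion language as an independent letter; the Dyck-path shape of the register count along any run makes the ``first return to $0$'' decomposition unambiguous. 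The two places where a full write-up must do real work, and which you correctly flag, are (i) the exchange between register-index transitions and free-name transitions when you relabel index $1$ by the fresh name $\aname_{q,q'}$ (the side conditions $\forall m>s\,.\,\sigma(s)\neq\sigma(m)$ and $s\in\Nf\setminus Im(\sigma)$ do not coincide literally and only match up to $\alpha$-conversion and a genuinely fresh choice of $\aname_{q,q'}$), and (ii) the fact that the accepted objects are nominal words of the form $\ll\aname.w\gg$ while the automaton consumes the bare brackets, so the bijection between runs and words must be stated modulo $\alpha$-equivalence. Neither is a gap in the approach, only in the level of detail; I see no step that would fail.
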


\subsection{Angluin's Algorithm $\lstar$}\label{st:Lstar}
The algorithm $\lstar$ was introduced in~\cite{Angluin87} to learn a
finite automaton accepting a given regular language $L$ over an
alphabet $\Sigma$.
The basic idea of the algorithm is to implement a dialogue between a
\emph{learner} and a \emph{teacher}.
The learner may ask the teacher for \textit{membership} queries
``$w \in L$?'' to check whether a word $w$ is in the given
language.
Moreover, the learner may submit an automaton $M$ to the teacher who
replies ``yes'' if $\lofm{M} = L$, or provides a counter-example
showing that $\lofm{M} \neq L$.
The teacher is assumed to answer all the learner's questions
correctly.

Key data structures of $\lstar$ are \emph{observation tables}
representing finite predicates of words over $\Sigma$ classifying them
as members of $L$ or not.
\begin{definition}[Observation Tables~\cite{Angluin87}]
  An \emph{observation table} $(S,E,T)$ consists of nonempty finite languages $S, E\subseteq\Sigma^\star$ such that $S$ is prefix-closed and $E$ is suffix-closed, and a function $T:(S\cup S\cdot \Sigma)\cdot E\rightarrow \{0,1\}$.
\end{definition}
The rows of an observation table are labelled by elements of
$S\cup S\cdot \Sigma$, and the columns are labelled by elements of $E$
with the entry for row $s$ and column $e$ given by $T(s\cdot e)$.
A row of the table can be represented by a function
$\row(s): E\rightarrow \{0,1\}$ such that $\row(s)(e)=T(s\cdot e)$.
A word $s\cdot e$ is a member of $L$ of $(S,E,T)$ iff
$T(s\cdot e)= 1$.  An observation table $(S,E,T)$ is \emph{closed}
when
\[
  \forall w\in S\cdot \Sigma\,.\,\exists s\in S\,.\,\row(w)=\row(s)
\]
An observation table $(S,E,T)$ is \emph{consistent} when for all
$a \in \Sigma$ and all $s,s'\in S$
\[
  \row(s)=\row(s')\implies \row(sa)=\row(s'a)
\]
A closed and consistent observation table $(S,E,T)$ has an associated
finite automaton $M=(Q,\delta,q_0,F)$ given by
\begin{itemize}
\item $Q =\{\row(s) \sst s \in S\}$,
\item $q_0 = \row(\emptyword)$,
\item $F =\{\row(s) \sst \row(s)(\emptyword)=1, s \in S \}$,
\item {$\delta(\row(s),a)=\row(s\cdot a),\ a\in \Sigma$.}
\end{itemize}

To see that this is a well-defined automaton, note that the initial state
is defined since $S$ is prefix-closed and must contain
$\emptyword$. Similarly, $E$ is suffix-closed and must contain
$\emptyword$. And, if $s,s'\in S, \row(s)=\row(s')$, then
$T(s)=T(s\cdot\emptyword)$ and $T(s')=T(s'\cdot\emptyword)$ are equal
as defined.
The transition function is well-defined since the table is closed and
consistent. Suppose $s$ and $s'$ are elements of $S$ such that
$\row(s)=\row(s')$. Since the table $(S,E,T)$ is consistent,
$\forall a\in \Sigma, \row(sa)=\row(s'a) $. And the value of
$\row(sa)$ is equal to such a $\row(s'')$ for an $s''\in S$, since the
table is closed.

\begin{figure}[h]
  \centering
  \begin{algorithmic}[1]
    
    \State{Initialisation: $S=\{\emptyword\}, E=\{\emptyword\}$.}
    \State{Construct the initial observation table $(S,E,T)$ by asking for membership queries about $(S\cup S\cdot \Sigma)\cdot E$.}\label{ag:init}
    \Repeat\label{ag:mainstart}
    \While{$(S,E,T)$ is not closed or not consistent}
    \If{$(S,E,T)$ is not closed}\label{ag:or_closed}
    \State{find $s'\in S\cdot \Sigma$ such that\;}
    \State{$\row(s)\neq \row(s')$ for all $s\in S$ ,\;}
    \State{add $s'$ into $S$,\;}
    \State{extend $T$ to $(S\cup S\cdot \Sigma)\cdot E$ using membership queries.}
    \EndIf
    \If{$(S,E,T)$ is not consistent}\label{ag:or_consistent}
    \State{find $s_1,s_2\in S$,$e\in E$ and $a\in \Sigma$ such that\;}
    \State{$\row(s_1)=\row(s_2)$ and $\row(s_1\cdot a)(e)\neq \row(s_2\cdot a)(e)$,} 	
    \State{Add $a\cdot e$ into $E$,\;}
    
    \State{extend $T$ to $(S\cup S\cdot \Sigma)\cdot E$ using membership queries.}
    \EndIf
    \EndWhile
    \State{Construct an automaton $M$ from table $(S,E,T)$ and ask teacher an equivalence query.}
    \If{teacher replies a counterexample $c$}
    \State{add $c$ and all its prefixes into $S$.}
    \State{extend $T$ to $(S\cup S\cdot \Sigma)\cdot E$ using membership queries.}
    \EndIf
    \Until{teacher replies yes to equivalence query $M$.}\label{ag:mainend}
    \State{Halt and output $M$.}
  \end{algorithmic}
  \caption{The learner in $\lstar$.}\label{Angluin_LA}
\end{figure}

The learning process of \lstar is shown in Figure~\ref{Angluin_LA}.
Let  $L$ be the input regular language over an alphabet $\Sigma$.
Initially, the observation table $(\{\emptyword\},\{\emptyword\},T)$
is such that $T$ is initialised by asking for membership queries about
$\emptyword$ and each element in $\Sigma$ (line~\ref{ag:init}).
Then the algorithm enters into the main loop
(lines~\ref{ag:mainstart}-\ref{ag:mainend}). Inside of the main loop,
a while loop tests the current observation table $(S,E,T)$ for
closedness (line~\ref{ag:or_closed}) and consistency
(line~\ref{ag:or_consistent}).

If the current observation table $(S,E,T)$ is not closed, the
algorithm finds $s'$ in $S\cdot \Sigma$ such that $\row(s')$ is
different from $\row(s)$ for all $s\in S$. Then the word $s'$ is added
into $S$ and new rows are added for words $s'\cdot a$ for all
$a\in \Sigma$. Thus, $T$ is extended to $(S\cup S\cdot \Sigma)\cdot E$
by asking for membership queries about missing elements.

Similarly, if $(S,E,T)$ is not consistent, the algorithm finds $s_1,s_2\in S$,$e\in E$, and $a\in \Sigma$ such that $\row(s_1)=\row(s_2)$ but $\row(s_1\cdot a)(e)\neq \row(s_2\cdot a)(e)$. The word $a\cdot e$ is added into $E$. That is, each row in the table has a new column $a\cdot e$. $T$ is extended to $(S\cup S\cdot \Sigma)\cdot E$ by asking for missing elements $\row(s)(a\cdot e)$ for all $s\in (S\cup S\cdot \Sigma)$.

An associated automaton $M$ is constructed when the observation table
$(S,E,T)$ is closed and consistent. And then, an equivalence query
about $M$ is asked for. The algorithm terminates and outputs $M$ when
the teacher replies ``yes" to the query. If the teacher replies with a
counterexample $c$, the word $c$ and all its prefixes are added into
$S$, and then $T$ is extended by asking membership queries about new
entries in $(S\cup S\cdot \Sigma)\cdot E$. Then, a new round for the main
loop of closedness and consistency starts.


\section{Nominal Learning}\label{st:LNL}\label{sec:nla}
In this section, we introduce our learning algorithm based on nominal
automata. Our teacher still answers two kinds of queries: membership
queries and equivalence queries regarding a target nominal regular
language.

\subsection{Preliminaries}
Before introducing our learning algorithm, some auxiliary notions are
necessary to give a concrete representation of nominal languages
and automata.
In fact, binders yield infinitely many equivalent representation of nominal words
due to alpha-conversion.
For instance, $\ll \aname. \aname \gg$ and
$\ll \aname[m].\aname[m] \gg$ are the same nominal word \emph{up-to}
renaming of their bound name.
We introduce \emph{canonical expressions} to give a finitary
representation of nominal regular languages.

\begin{definition}[Canonical expressions]\label{def:CE}
  Let $1 \leq n  \in \Nat$ a natural number and $ne$ a closed nominal
  regular expression.
  The \emph{$n$-canonical representation} $\ce{n}{ne}$ of $ne$ is
  defined as follows
  \begin{itemize}
  \item $ne \in \{\emptyword,\emptyset\}\cup \Sigma \implies \ce{n}{ne} =ne$
  \item $\ce{n}{ne+ne'} =\ce{n}{ne}+\ce{n}{ne'}$
  \item $\ce{n}{ne\cdot ne'} =\ce{n}{ne}\cdot \ce{n}{ne'}$,
  \item $\ce{n}{ne^*} =(\ce{n}{ne})^*$
  \item $ne=\langle \aname. ne'\rangle \implies \ce{n}{ne} =\langle n. \ce{n+1}{ne'[n/\aname]}\rangle$
  \end{itemize}
  where $ne'[n/\aname]$ is the capture-avoiding substitution of
  $\aname$ for $n$ in $ne'$.
  The \emph{canonical representation} of $ne$ is the term
  $\ce 1 {ne}$.
\end{definition}
Note that the map $\ce{\_}{\_}$ does not change the structure of the
nominal regular expression $ne$.  Basically, $\ce \_ \_$ maps nominal
regular expressions to terms where names are concretely represented as
positive numbers.
\begin{example}
  Given $\Sigma=\{a,b\}$, we give some examples of canonical
  representations of nominal expressions.
  \begin{small}
  \begin{itemize}
  \item $aba$ is the canonical representations of itself; indeed $\ce{1}{aba} = aba$
  \item $\ce{1}{\langle \aname . a\, \aname \rangle} = \ce{1}{\langle \aname[m]. a\, \aname[m] \rangle}{}=\langle 1.a\,1 \rangle$
    is the canonical representation of both $\langle \aname.a\,\aname \rangle$ and $\langle \aname[m].a\,\aname[m] \rangle$
  \item the canonical representation of
    $ne = \langle \aname.a\,\aname \langle \aname[m].\aname\, b\, \aname[m]
    \rangle\rangle\langle \aname[m].\aname[m] \rangle$ is
    $\ce{1} {ne} = \langle 1.a1\ce{2}{(\langle \aname[m].1\, b \, \aname[m] \rangle)}\langle
    1.1\rangle \\=\langle 1.a\,1\langle 2.1\,b\,2 \rangle \rangle\langle 1.1
    \rangle$.
  \end{itemize}  
  \end{small}
  Note that the map $\ce{\_}{\_}$ replaces names with numbers so that
  alpha-equivalent expressions are mapped to the same term (second example above).
\end{example}
Canonical expressions are the linguistic counter part of the mechanism
used in the definition of $(\Sigma, \Nom_{fin})$-automaton give
in~\cite{Kurz0T13}, where transitions with indexes can consume
bound names of words.
Thus, we use canonical expressions in order to represent nominal
languages concretely.


Fix a nominal regular language $L$, in our algorithm, the learner asks for membership queries about legal
words. If a word $w$ is not legal, the learner marks it as
$\bot$ in the observation table. The membership query consists of a
legal word $w$ and it has the following possible answers:
\begin{itemize}
\item if $w\in L$, the answer is \quo{1},
\item if $w$ is a prefix of a word in $L$, the answer is \quo{$P$},
\item otherwise, the answer is \quo{0}.
\end{itemize}

As in Angluin's \lstar algorithm, only the teacher knows $L$.
Unlike in \lstar, the learner in our algorithm does not know the whole
alphabet $A_n$.
The learner knows $\Sigma$ initially and learns names via
counterexamples.
We will see that the learner knows the whole alphabet
when the algorithm terminates.
  \paragraph{Remark.}The answer \quo{P} is used for efficiency. In
  fact, the teacher could answer \quo{0} instead of \quo{P}. However,
  this would require the learner to ask more membership or equivalence
  queries.
  This is confirmed by some experimental results that are not in scope of this paper.
\subsection{Nominal observation tables}
Observation tables are pivotal data structure to ensure the
algorithm's functionalities. A closed and consistent observation table
allows us to construct a minimal automaton.
We extend Angluin's observation tables to \emph{nominal observation
  tables}, \nOtable[s] for short.
\begin{definition}[\nOtable]\label{def:nOT}
A \emph{legal word} is a prefix of a nominal word; the \emph{depth}
  $\|w\|$ of a legal word $w$ is the highest number of nested binders
  in $w$.
  Let
  \[
    \Alp_0 = \Sigma
    \qquad\text{and}\qquad
    \Alp_n= \{\ll,\gg\} \cup \Sigma\cup \underline{n}
    \quad \text{ for } 0 < n \in \Nat
  \]
  A tuple $\Otable$ is an
  \emph{\nOtable} if
  \begin{itemize}
  \item{$S\subseteq {\Alp_n}^*$ is a prefix-closed set of legal strings, for all $s\in S$,$\reg{s}\leq n$, }
  \item{$E\subseteq {\Alp_n}^*$ is suffix-closed,}
  \item{$T:(S\cup S\cdot {\Alp_n})\cdot E\rightarrow \labelsinT$.}
  \end{itemize}
\end{definition}
As in Angluin's definition, an \nOtable $\Otable$ consists of rows
labelled by legal words in $S\cup S\cdot \Alp_n$ and columns labelled
by words in $E$:
\begin{gather*}
  \row {}:{}   (S\cup S\cdot \Alp_n) \to (E \to \labelsinT)\\
  \row(s)(e)= T(s\cdot e)
\end{gather*}
In order to reflect the layers of nominal automata, we use $\reg{\_}$ to distinguish rows. Therefore, we need the following auxiliary notion of equivalence of rows:
in an \nOtable $\Otable$, for all $ s, s'\in S\cup S\cdot \Alp_n$, $$\row(s)\doteq\row(s')\ \Longleftrightarrow\ \row(s)=\row(s')\  and\ \reg{s}=\reg{s'}.$$ 
Accordingly, the definition of closed and consistent table changes as follows.
\begin{definition}[Closed and Consistent Tables]
  An \emph{\nOtable} $\Otable$ is \emph{closed} when
  $$\forall s'\in S\cdot {\Alp_n}.\exists s\in S.\ \ \row(s')\doteq\row(s).$$ 
  An \emph{\nOtable} $\Otable$ is \emph{consistent} when $$\forall \alpha\in {\Alp_n}.\forall s,s'\in S\ \ \row(s)\doteq\row(s')\implies \row(s\alpha)\doteq\row(s'\alpha). $$
\end{definition}
\subsection{From \nOtable[s] to nominal automata}
Analogously to Angluin's theory, closed and consistent \nOtable[s]
correspond to deterministic finite nominal automata.
\begin{definition}
The ($\Sigma,\mathcal{N}_{fin}$)-automaton $M=(Q,q_0,F,\delta)$ associated with a closed and consistent \emph{\nOtable} $\Otable$ is defined as 
\begin{itemize}
\item{$\Sigma={\Alp_n}\setminus\{\{\ll,\gg\}\cup\underline{n}\}$, $\mathcal{N}_{fin}=\underline{n}$,}
\item{a set of states $Q=\{(\row(s),\reg{s})\mid s\in S \}$ with a map $\Mreg{\_}$, and $\Mreg{q}=\reg{s}$ for each $q=(\row(s),\reg{s})\in Q$,}
\item{an initial state $q_0=(\row(\emptystring),\reg{\emptystring})$,}
\item{a set of final states $F=\{(\row(s),\reg{s})\mid \row(s)(\emptystring)=1,\reg{s}=0\ and\ s\in S \}$, }
\item{A transition function is a partial function $\delta:Q\times {\Alp_n}\rightarrow Q$: for all $s\in S,\alpha\in {\Alp_n}$,
$\delta((\row(s),\reg{s}),\alpha)=
                   (\row(s\alpha),\reg{s\alpha})$ if $s\alpha\in S\cup S\cdot {\Alp_n}$.
}
\end{itemize}
\end{definition}

Accordingly, we define a partial function $\delta^\ast: Q\times {\Alp_n}^* \to Q$ inductively as follows
\begin{align*}
  \delta^\ast(q,\emptyword) &= q\\
  \delta^\ast(q,aw) &= \delta^\ast(\delta(q,a),w)
\end{align*}
for all $a\in \Alp_n,\ w\in {\Alp_n}^*$, $q\in
Q$.  Note that $\delta^*(q,a)=\delta^*(q,a\cdot
\emptyword)=\delta^*(\delta(q,a),
\emptyword)=\delta(q,a)$.
\begin{lemma}\label{lem:nTabledeltastar}
  Let $M=(Q,q_0,F,\delta)$ be the automaton associated with a closed and consistent \nOtable $\Otable$. Suppose $w,u\in{\Alp_n}^*$. We have  $\delta^\ast(q,w\cdot u)=\delta^\ast(\delta^\ast(q,w), u)$ for all $q\in Q$.
\end{lemma}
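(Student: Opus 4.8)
The plan is to prove the identity $\delta^\ast(q, w \cdot u) = \delta^\ast(\delta^\ast(q,w), u)$ by induction on the length of $w$, treating $u$ and $q$ as fixed but arbitrary (or, equivalently, proving the statement ``for all $q$ and all $u$'' by induction on $|w|$, so that the induction hypothesis may be applied at a different state). This is the standard argument used to establish that the extended transition function of a DFA is a monoid action; the nominal decorations ($\reg{\_}$, the alphabet $\Alp_n$, partiality of $\delta$) do not interfere with the combinatorics, so the proof mirrors the classical one.

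\medskip

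For the base case $w = \emptyword$, both sides reduce to $\delta^\ast(q,u)$: the left-hand side because $\emptyword \cdot u = u$, and the right-hand side because $\delta^\ast(q,\emptyword) = q$ by the first defining clause of $\delta^\ast$. For the inductive step, write $w = a w'$ with $a \in \Alp_n$ and $w' \in {\Alp_n}^*$, and assume the statement holds for $w'$ (with all states and all suffixes). Then
\[
  \delta^\ast(q, (a w') \cdot u) = \delta^\ast(q, a (w' \cdot u)) = \delta^\ast(\delta(q,a), w' \cdot u) = \delta^\ast(\delta^\ast(\delta(q,a), w'), u) = \delta^\ast(\delta^\ast(q, a w'), u),
\]
where the first equality is associativity of string concatenation, the second and last are the second defining clause of $\delta^\ast$ (applied to $a(w'\cdot u)$ and to $aw'$ respectively), and the middle equality is the induction hypothesis instantiated at the state $\delta(q,a)$ and suffix $u$. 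This closes the induction.

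\medskip

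The one point that needs a little care — and which I expect to be the only real obstacle — is that $\delta$ and hence $\delta^\ast$ are \emph{partial}: the equation should be read as an equality of partial values, i.e.\ the left-hand side is defined iff the right-hand side is, and then they are equal. In the inductive step this amounts to observing that each rewriting above preserves definedness in both directions: $\delta^\ast(q, aw'\cdot u)$ is defined iff $\delta(q,a)$ is defined and $\delta^\ast(\delta(q,a), w'\cdot u)$ is defined, and similarly for $\delta^\ast(q,aw')$; the induction hypothesis, stated as an equality of partial functions, then transports definedness through the $w'$-part. So one should phrase the lemma and its proof in terms of the Kleene equality of partial maps and note at each step that the defining clauses of $\delta^\ast$ are themselves partial-function equations. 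With that convention fixed, the chain of equalities above goes through verbatim.
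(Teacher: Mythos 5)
Your proof is correct and takes exactly the route the paper intends: the paper's own proof consists solely of the sentence ``By induction on length of $w$,'' and your argument is the standard instantiation of that induction. Your additional remark about reading the identity as a Kleene equality of partial maps is a precision the paper silently glosses over, and it is handled correctly.
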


\begin{proof}
  By induction on length of $w$.
\end{proof}

\begin{theorem}\label{thm:nTabledeltastar}
  Assume that  $M=(Q,q_0,F,\delta)$ is the automaton associated with a closed and consistent \nOtable $\Otable$.
  \begin{itemize}
  \item
    For all $w$ in $S\cup S\cdot {\Alp_n}$, $\delta^\ast(q_0,w)=\state{w}$.
  \item
    For all $w$ in $S\cup S\cdot {\Alp_n}$ and $u$ in $E$, $\delta^\ast(q_0,w\cdot u)$ in $F$ if and only if $$\row(w)(u)=1.$$
  \end{itemize}
\end{theorem}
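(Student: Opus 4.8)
The plan is to follow the shape of Angluin's correctness argument, adapted to the layered structure of \nOtable[s]. Both items are proved by induction — the first on $\length{w}$, the second (after a small reduction) on $\length{u}$ — and the algebraic ingredients are \cref{lem:nTabledeltastar}, the observation that $\delta^\ast(q,a)=\delta(q,a)$, prefix-closedness of $S$, suffix-closedness of $E$, and, crucially, closedness and consistency of the table, which are precisely what make $\delta$ and $F$ well defined on $Q$.

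For the first item, induct on $\length{w}$. If $w=\emptystring$ then $\delta^\ast(q_0,\emptystring)=q_0=\state{\emptystring}$ by definition of $q_0$ (and $\emptystring\in S$ as $S$ is nonempty and prefix-closed). If $w=w'\alpha$ with $\alpha\in\Alp_n$, then, since $w\in S\cup S\cdot\Alp_n$ and $S$ is prefix-closed, $w'\in S$; the induction hypothesis gives $\delta^\ast(q_0,w')=\state{w'}$, and \cref{lem:nTabledeltastar} together with $\delta^\ast(q,a)=\delta(q,a)$ yields $\delta^\ast(q_0,w)=\delta(\state{w'},\alpha)$. As $w'\in S$ and $w'\alpha=w\in S\cup S\cdot\Alp_n$, the defining clause of $\delta$ gives $\delta(\state{w'},\alpha)=\state{w'\alpha}=\state{w}$ — and closedness is what guarantees that $\state{w}$ is in fact a state of $M$ when $w\notin S$.

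For the second item, I first reduce to rows labelled by $S$. By \cref{lem:nTabledeltastar} and the first item, $\delta^\ast(q_0,w\cdot u)=\delta^\ast(\state{w},u)$, and by closedness $\state{w}=\state{\hat w}$ for some $\hat w\in S$ with $\row(\hat w)=\row(w)$ (take $\hat w=w$ when $w\in S$). So it is enough to prove that for all $s\in S$ and $u\in E$,
\[
  \delta^\ast(\state{s},u)\in F \iff \row(s)(u)=1 ,
\]
which I do by induction on $\length{u}$. For $u=\emptystring$ (which lies in $E$ by suffix-closedness): $\delta^\ast(\state{s},\emptystring)=\state{s}$, and unwinding the definitions of $F$ and of $\doteq$, together with the fact that $\row(s)(\emptystring)=T(s)=1$ forces $s$ to be a closed nominal word and hence $\reg{s}=0$, gives $\state{s}\in F\iff\row(s)(\emptystring)=1$. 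For $u=\alpha u'$ with $\alpha\in\Alp_n$: suffix-closedness gives $u'\in E$; since $s\in S$ and $s\cdot u\in\dom T$ is legal, $s\alpha\in S\cdot\Alp_n$, so $\delta(\state{s},\alpha)=\state{s\alpha}$, and by closedness $\state{s\alpha}=\state{\bar s}$ for some $\bar s\in S$ with $\row(\bar s)=\row(s\alpha)$. Hence $\delta^\ast(\state{s},u)=\delta^\ast(\state{\bar s},u')$, and the induction hypothesis chains to $\delta^\ast(\state{\bar s},u')\in F\iff\row(\bar s)(u')=1\iff\row(s\alpha)(u')=1\iff T(s\alpha u')=1\iff T(s\cdot u)=1\iff\row(s)(u)=1$.

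The main obstacle is not any individual computation but the bookkeeping around closedness and consistency: at every inductive step one must present the current state through a representative in $S$, so that the defining clause of $\delta$ applies and the row of the successor is available in the table, and then observe that the successor again admits such a representative — this is exactly where closedness is invoked each time, while consistency is what makes the choice of representative immaterial, i.e. makes $\delta$ well defined in the first place.
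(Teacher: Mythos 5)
Your proof is correct and follows essentially the same route as the paper's: induction on $\length{w}$ for the first item and on $\length{u}$ for the second, using \cref{lem:nTabledeltastar}, prefix-/suffix-closedness, and closedness of the table to pick representatives in $S$ at each step. If anything, you are slightly more careful than the paper at the base case of the second induction, where you justify why $\row(s)(\emptystring)=1$ also forces $\reg{s}=0$ so that the full membership condition for $F$ is met.
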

\begin{proof}
  Let $w=w'a$ in $S\cup S\cdot {\Alp_n}$ and $u=a\cdot u'$ in $E$.\\
  \\
  Since $S$ is prefix-closed, all prefixes of $w$ are in $S$, that is, $w'$ is in $S$. We know:
  \begin{align*}
    \delta^\ast(q_0,w)&=\delta^\ast(q_0,w'a)\\
                      &=\delta^\ast(\delta^\ast(q_0,w'),a)&& \textrm{by Lemma~\ref{lem:nTabledeltastar}}\\
                      &=\delta^\ast(\state{w'},a)&& \textrm{by induction hypothesis}\\
                      &=\delta(\state{w'},a) && \textrm{by the definition of $\delta^\ast$}\\
                      &=\state{w'a} && \textrm{by the definition of $\delta$}\\
                      &=\state{w}
  \end{align*}
  Since $E$ is suffix-closed, all suffixes of $u$ are in $E$. Depending on the length of $u$, we have two situations.
  \begin{itemize}
  \item When
    $u=\epsilon$,
    $\row(w)(u)=\row(w)(\emptyword)$ and
    $\delta^\ast(q_0,w\cdot
    u)=\delta^\ast(q_0,w)$. From preceding
    proof,
    $\delta^\ast(q_0,w)=\state{w}$. Because
    the table is closed, there is a $w'\in
    S$ such that $\row(w')\doteq
    \row(w)$. $\delta^\ast(q_0,w\cdot
    u)$ is in $F$ if and only if
    $\row(w')$ is in
    $F$ from the definition of
    $F$. Thus
    $\row(w')(\emptyword)=\row(w)(\emptyword)=1$,
    that is, $\row(w)(u)=1$.
  \item{Assume that when the length of $u'\in E$ is $n$, we have that for all $w$ in $S\cup S\cdot {\Alp_n}$ and $u$ in $E$, $\delta^\ast(q_0,w\cdot u)$ in $F$ if and only if $\row(w)(u)=1$. Let $u=au'$ and $u\in E$. Because the table is closed, there is a $w'\in S$ such that $\row(w')\doteq\row(w)$.
      \begin{align*}
        \delta^\ast(q_0,w\cdot u)
        &=\delta^\ast(\delta^\ast(q_0,w), u) && \textrm{by Lemma~\ref{lem:nTabledeltastar}}\\
        &=\delta^\ast(\state{w}, u) && \textrm{by preceding proof}\\
        &=\delta^\ast(\state{w'},u) && \textrm{since $\row(w')=\row(w)$}\\
        &=\delta^\ast(\state{w'},au') && \textrm{$u=au'$}\\
        &=\delta^\ast(\row(w'\cdot a),u') && \textrm{by closedness and definition of $\delta$}\\
        &=\delta^\ast(\delta^\ast(q_0,w'\cdot a),u')&& \textrm{by preceding proof}\\
        &=\delta^\ast(q_0,w'\cdot a\cdot u')
      \end{align*}
    }
  \end{itemize}
  By induction hypothesis on $u'$, $\delta^\ast(q_0,w'\cdot a\cdot
  u')$ is in $F$ if only if $\row(w'\cdot a)(
  u')=1$. Because $\row(w)\doteq \row(w')$ and $u=au'$, $\row(w'\cdot
  a)( u')=T(w'\cdot a\cdot u')=\row(w')( a\cdot
  u')=\row(w')(u)=\row(w)(u)$. Therefore $\delta^\ast(q_0,w\cdot
  u)$ in $F$ if and only if $\row(w)(u)=1$.
\end{proof}

We are now ready to introduce a learning algorithm for our nominal
automata.

\section{The \nlstar\ Algorithm}
We dubbed our algorithm \nlstar, after \emph{nominal} \lstar.
The algorithm is shown in Figure~\ref{LANA_a3}.
The learner in \nlstar is similar to the one in \lstar.
Basically, our learner modifies the initial \nOtable until it becomes
closed and consistent in the nominal sense (according to notions
introduced before).
When the current \nOtable $\Otable$ is closed and consistent, the
learner would ask the teacher if the automaton associated with $\Otable$
accepts the input language $L$.
If this is the case, the teacher will reply \quo{yes} and the learning process halts.
Otherwise, the learning process continues after the teacher has
produced a counterexample.

Because of the new definitions of closedness and consistency, we
refine some actions about checking closedness
(line~\ref{codeline:closedchanges}) and consistency
(line~\ref{codeline:findrowsconsistency}).
If the current \nOtable $\Otable$ is not closed, the learner finds a
row $s'$ such that for no $s\in S$ we have $\row(s') \doteq \row(s)$.
If $\Otable$ is not consistent, the learner finds a word $a\cdot e$,
with $a\in \Alp_n$ and $e\in E$, such that for some $s_1\in S$ and
$s_2\in S$ with $\row(s_1) \doteq \row(s_2)$, we have
$\row(s_1\cdot a)(e)\neq\row(s_2\cdot a)(e)$.

\begin{figure}
  \centering\small
  \begin{algorithmic}[1]
    \State{Initialisation: $S=\{\emptyword\}, E=\{\emptyword\}$, $n=0$}\label{codeline:init}
    \State{Asking for membership queries about $\emptyword$ and each $a \in \Alp_n$ build the initial observation table $\Otable$.}
    \Repeat
    \While{$\Otable$ is not closed or consistent}\label{codeline:while1}
    \If{$\Otable$ is not closed}\label{codeline:closedstatement}
    \State{\textbf{find} $s'\in S\cdot \Alp_n$ such that $\row(s)\doteq\row(s')$ is not satisfied for all $s\in S$  \;}\label{codeline:closedchanges}
    \State{\textbf{add} $s'$ into $S$\;}\label{codeline:addstate}
    \State{\textbf{extend} $T$ to $(S\cup S\cdot \Alp_n)\cdot E$ using membership queries.}\label{codeline:closedextension}
    
    \EndIf
    \If{$\Otable$ is not consistent}\label{codeline:consiststatement}
    \State{\textbf{find} $s_1,s_2\in S$,$e\in E$ and $a\in \Alp_n$ such that \label{codeline:findrowsconsistency}
      $\row(s_1)\doteq\row(s_2)$ but $\row(s_1\cdot a)(e)\neq \row(s_2\cdot a)(e)$.}\label{codeline:consistentchanges
    }
    
    \State{\textbf{Add} $a\cdot e$ into $E$\;}\label{codeline:addE}
    \State{\textbf{extend} $T$ to $(S\cup S\cdot \Alp_n)\cdot E$ using membership queries.}\label{codeline:consistentextension}
    \EndIf
    \EndWhile
    \State{Construct an automata $M$ associated to $\Otable$.}
    \State{Ask equivalence query about $M$.}
    \If{teacher replies a counterexample $c$}\label{codeline:replycounterexample}
    \State{add $c$ and all its prefixes into $S$.}\label{codeline:addcounterexample}
    \State{\textbf{extend} $\Alp_n$ with $\reg{s}$ for all $s\in S$, $\reg{s}>0$.}\label{codeline:alphabetchanges2}
    \State{\textbf{extend} $T$ to $(S\cup S\cdot \Alp_n)\cdot E$ using membership queries.}
    \EndIf
    \Until{teacher replies yes to $M$.}\label{codeline:replyyes}
    
    \State{Halt and output $M$.}
  \end{algorithmic}
  \caption{The learner of Learning Algorithm for Nominal Automaton}\label{LANA_a3}	
\end{figure}
The main difference with respect to the algorithm of Angluin is that
the learner has partial knowledge of the alphabet.
The alphabet $\Alp_n$ is enlarged by adding names
(line~\ref{codeline:alphabetchanges2}) during the learning
process. More precisely, the learner expands the alphabet if the
counterexample requires to allocate fresh names.
When names are required, the operators of allocations and deallocations are added into $\Alp_n$. When the algorithm terminates, the learner's alphabet $\Alp_n$ is the
alphabet of the given language. Like in the original algorithm, our
algorithm terminates when the teacher replies \quo{yes} to an equivalence
query.

We now show that \nlstar\ is correct.
That is, that eventually the teacher replies \quo{yes} to an equivalence query.
In other word, \nlstar\ terminates with a \quo{yes} answer to an equivalence query.
Hence, the automaton submitted in the query accepts the input language.

\begin{theorem}\label{thm:correct}
  The algorithm terminates, hence it is correct.
\end{theorem}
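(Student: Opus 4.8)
The plan is to follow the classical Angluin correctness argument, adapted to the nominal setting, by showing that the main loop can only be executed finitely many times. The key invariant will be that the number of distinct rows (up to the refined equivalence $\doteq$) in the \nOtable{} never decreases and is bounded above by a quantity determined by the target language $L$ — namely, the number of states of the minimal $(\Sigma,\Nf)$-automaton accepting $L$, which exists and is finite by \cref{thm:kleeneforNA} together with the closure properties in \cref{thm:closureprops}. First I would argue that every operation that modifies the table strictly increases the number of $\doteq$-classes among the rows labelled by $S$: when the table is not closed, the witness $s'$ added to $S$ has a row $\doteq$-distinct from every existing row in $S$, so the count goes up by at least one; when the table is not consistent, adding $a\cdot e$ to $E$ separates two previously $\doteq$-equal rows, again increasing the count; and when a counterexample $c$ is processed, I would use \cref{thm:nTabledeltastar} to show that the current automaton $M$ misclassifies some prefix of $c$, which forces a new $\doteq$-class to appear once $T$ is extended (this is the standard argument that a counterexample cannot be ``absorbed'' by a table whose associated automaton is already distinct from the target).

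Next I would bound the number of $\doteq$-classes. Since each row is a pair $(\row(s),\reg{s})$ and $\row(s)$ classifies finitely many words in $E$ into $\labelsinT$, and since two rows of the minimal automaton that are Myhill–Nerode-inequivalent must eventually be distinguished by some $e\in E$, the number of classes is at most the number of states of the minimal automaton for $L$. This gives an absolute ceiling independent of the run, so the while loop (closedness/consistency repair) terminates, and the outer repeat loop can add at most finitely many counterexamples before the associated automaton coincides with the minimal one — at which point the equivalence query returns \quo{yes}. A subtle point to handle carefully is the alphabet growth on line~\ref{codeline:alphabetchanges2}: I must check that $\Alp_n$ stabilises, i.e. that only finitely many fresh indices are ever needed. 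This follows because $\reg{s}\le n$ is maintained and, more importantly, because the depth $\theta(L)$ of the target language is finite (\cref{def:rsComplementation}), so no legal word relevant to $L$ needs nesting depth beyond $\theta(L)$; hence $n$ is bounded and $\Alp_n$ eventually stops growing.

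Finally, once termination is established, correctness is immediate: the algorithm exits the repeat loop only when the teacher answers \quo{yes} to an equivalence query about the constructed automaton $M$, and the teacher is assumed to answer correctly, so $\lofm{M}=L$. The step I expect to be the main obstacle is the counterexample-progress argument: showing rigorously that processing a counterexample $c$ (by adding $c$ and all its prefixes to $S$) genuinely forces the row count to strictly increase, rather than merely leaving the table closed-and-consistent with the same automaton. In the classical proof this relies on the interplay between $\delta^\ast$ on the table-automaton and membership values; here one must additionally track the $\reg{\_}$ component and the possibility that $c$ introduces fresh names, so the argument must be phrased in terms of $\doteq$ and the extended alphabet $\Alp_n$, invoking \cref{thm:nTabledeltastar} to relate runs of $M$ to table entries. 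I would also need to confirm that ``closed and consistent'' is always reachable after finitely many repairs, which is where the monotone-bounded-counter invariant does the real work.
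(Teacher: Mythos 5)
Your plan follows essentially the same route as the paper's proof: both bound the number of closedness/consistency repairs and the number of equivalence queries by the size of the minimal automaton for $L$, and then derive correctness from the fact that the loop only exits when the teacher answers \quo{yes}. If anything, your version is slightly more careful on two points the paper's proof glosses over — you phrase the while-loop termination as a strictly increasing $\doteq$-class counter bounded by the minimal automaton (the paper only argues finiteness of possible row contents), and you explicitly note that $\Alp_n$ must stabilise because the binding depth of $L$ is finite — so there is no gap relative to the paper's own level of rigour.
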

\begin{proof}
  We show that the if- and the while-statements terminate.
  Let us consider the if-statements first.
  It is easy to check that  closedness and
  consistency are decidable because these properties require just the inspection of the
  \nOtable (which is finite).
  Hence, the if-statements starting at
  lines~\ref{codeline:closedstatement}
  and~\ref{codeline:consiststatement} never diverge because their
  guards do not diverge and their then-branch is a finite sequence of
  assignments:
  \begin{itemize}
  \item The if-statement for closedness
    (line~\ref{codeline:closedstatement}) terminates directly if the
    table is closed. Otherwise, in case of making a table closed, we
    find a row $s' \in S \cdot \Alp_n$ such that $\row(s')\doteq\row(s)$
    is not satisfied for all $s\in S$. The algorithm adds $s'$ into
    $S$. Since $\Alp_n$ is finite and bounded by $n$, the sets $S$ and
    $E$ are both finite. Thus, $S \cdot \Alp_n$ is a finite set and
    there are finitely many choices for $s'$. That is, 
    line~\ref{codeline:addstate} can only be executed finitely
    times. Besides, the content of rows is one of the permutations and
    combinations of $\labelsinT$ which also has finite
    possibilities. So we conclude that the branch terminates.
  \item Similarly, the if-statement for consistency
    (line~\ref{codeline:consiststatement}) terminates directly if the
    table is consistent. Otherwise, to make the table consistent the
    algorithm searches for two rows $s_1,s_2\in S$ satisfying the
    condition at line~\ref{codeline:findrowsconsistency}.
    As in the previous case for closedness, to add elements into $E$
    there are only finitely many possibilities $s_1$, $s_2$, $a$, and
    $e$ (line~\ref{codeline:addE}). Thus, the branch of the
    if-statement terminates.
\end{itemize}
  Therefore, the while-statement (line~\ref{codeline:while1}) terminates in finite repetitions, since the algorithm makes a table closed and consistent in finite operations. Then, the algorithm will succeed in construct an automata $M$ associated to a closed and consistent table. Next, the learner asks for an equivalence query. The teacher replies a counterexample $c$ (line~\ref{codeline:replycounterexample}) or yes (line~\ref{codeline:replyyes}). It remains to prove that the learner only asks finitely many equivalence queries.

  Let $M$ be the nominal automaton associated to the current \nOtable
  $\Otable$. Assume that the equivalence query about $M$ fails. The
  teacher has to find a counterexample $c$; this is finitely
  computable since the nominal regular expressions are closed under
  the operations of Kleene algebra and under resource complementation.
  Hence, the if-statement on line~\ref{codeline:replycounterexample}
  goes the branch extending the table
  (line~\ref{codeline:addcounterexample}). Then, the algorithm will
  start a new loop (line~\ref{codeline:while1}) for the modified table
  by the counterexamples.
  We can prove that a closed and consistent table builds a minimal
  finite automaton (omitted for space reasons).
  A new automaton $M'$ will be constructed when the extended table
  $\Otable$ is closed and consistent. Since $M'$ handles the
  counterexample, $M'$ has more equivalent states to the minimal
  automaton accepted the given language, compared with $M$. Repeating
  this process, the automaton associated with a closed and consistent
  table has the same number of the minimal automaton which accepted
  the given language. Since the minimal automaton is unique, the two
  minimal automata are equal. The teacher relies yes to the automaton
  at such a point. Therefore, with respect to the number of the states
  of the minimal automaton accepted the given language, equivalence
  queries are finite and the algorithm terminates finally.
\end{proof}

In the following, we analyse the number of queries and the execution
time of \nlstar\ in the worst case. Let $M$ be the minimal automaton
accepting the given language and let $M$ have $\mathfrak{s}$ states.
Let $\mathfrak{b}$ be a bound on the maximum length of the
counterexamples presented by the teacher.

From the Figure~\ref{LANA_a3} (line~\ref{codeline:init}), we know that $S$ and $E$ contain one element $\lambda$ initially. As the algorithm runs, it will add one element to $S$ when $\Otable$ is not closed (line~\ref{codeline:closedextension}). And it will add one element to $E$ when $\Otable$ is not consistent (line\ref{codeline:consistentextension}). For each counterexample of length at most $\mathfrak{b}$ presented by the teacher, the algorithm will add at most $\mathfrak{b}$ elements to $S$ (line~\ref{codeline:addcounterexample}).

Thus, the cardinality of $S$ is depends on $\mathfrak{s}$ and $\mathfrak{b}$. In detail, $S$ is at most $$1+(\mathfrak{s}-1)+\mathfrak{b}(\mathfrak{s}-1)=\mathfrak{s}+\mathfrak{b}(\mathfrak{s}-1)$$ because $\Otable$ can be not closed at most $\mathfrak{s}-1$ times. As the same as the teacher replies counterexamples at most $\mathfrak{s}-1$ times. And each time the teacher replies with a counterexample of length $\mathfrak{b}$, $S$ will be increased by at most $\mathfrak{b}$ elements. 

The cardinality of $E$ is at most $\mathfrak{s}$, because $\Otable$ can be not consistent at most $\mathfrak{s}-1$ times.   

The cardinality of $S\cdot \Alp_n$ could calculate from two parts: the cardinality of $S$ and the cardinality of $\Alp_n$. We already know the cardinality of $S$ is at most $\mathfrak{s}+\mathfrak{b}(\mathfrak{s}-1)$. As the definition of $\Alp_n$, let $\mathfrak{k}$ be the cardinality of $\Sigma$. Therefore the cardinality of $\Alp_n$ is $\mathfrak{k}+n+2$ and the cardinality of $S\cdot \Alp_n$ is $(\mathfrak{k}+n+2)(\mathfrak{s}+\mathfrak{b}(\mathfrak{s}-1))$ at most.

Therefore, the maximum cardinality of $(S\cup S\cdot \Alp_n)\cdot E$ is at most $$(\mathfrak{k}+n+2)(\mathfrak{s}+\mathfrak{b}(\mathfrak{s}-1))\mathfrak{s}=O((\mathfrak{k}+n)\mathfrak{b}\mathfrak{s}^2).$$ 

The minimal automaton $M$ has $\mathfrak{s}$ states and the table $\Otable$ has one row initially. In the worst case, the table $\Otable$ adds only a distinguished row by every counterexample. The algorithm produces at most $\mathfrak{s}-1$  equivalence queries.

\paragraph{Running \nlstar: An Example}
Given a finite alphabet $\Sigma=\{a,b\}$, we have an example of learning a language $L$ representing as canonical nominal regular expression $cne=ab\langle 1^*\rangle$. 

In the first step, we initialize $S_1=\{\emptystring\}$, $E_1= \{\emptystring\}$, $n=0$ and ${\Alp_0}=\Sigma$, and construct $T_1$ as follows.\\
\textbf{Step 1}\\
\begin{minipage}{0.5\textwidth}
  \begin{center}
    $T_1$=
    \begin{tabular}{l|l|l}
      
      {$\reg{\_}$}  &{}  &      {$\emptystring$} \\
      \hline
      {0}  &  ${\emptystring}$ & {P} \\
      \hline
      {0}  & {$a$} & {P} \\
      {0}  &{$b$} & {0}\\
      
    \end{tabular}
  \end{center}
\end{minipage}
\begin{minipage}{0.5\textwidth}
  $(S_1,E_1,T_1,{\Alp_0})$ consistent?  There is only one row in $S$, thus the table is consistent.\\
  $(S_1,E_1,T_1,{\Alp_0})$ closed? No, $\row(b)\neq \row(\emptystring)$.\\
  So, $S_2\leftarrow S_1\cup \{b\}$ and we go to step 2.
\end{minipage}
\textbf{Step 2}\\
Let $S_2= S\cup \{b\}$ and $E_2=E$ and then construct a new observation table $(S_2,E_2,T_2,{\Alp_0})$ through membership queries.\\
\begin{minipage}{0.5\textwidth}
  \begin{center}
    $T_2$=
    \begin{tabular}{l|l|l}
      
      {$\reg{\_}$}  &{}  &      {$\emptystring$} \\
      \hline
      {0}  & ${\emptystring}$ & {P} \\
      {0}  &{$b$} & {0} \\
      \hline
      {0}  &{$a$} & {P}\\
      {0}  &{$aa$} & {0}\\
      {0}  &{$ab$} & {P}\\
    \end{tabular}
  \end{center}
\end{minipage}
\begin{minipage}{0.5\textwidth}
  $(S_2,E_2,T_2,{\Alp_0})$ closed? $\surd$ \\
  $(S_2,E_2,T_2,{\Alp_0})$ consistent? $\surd$\\
  Then, we compute the automaton $M$:
  
  \begin{tikzpicture}[>=stealth',shorten >=1pt,auto,node distance=15mm]
    \node[initial,state,inner sep=1pt,minimum size=0pt] (q0)      {$q_0$};
    \node[state,inner sep=1pt,minimum size=0pt]         (q1) [right of=q0]  {$q_1$};
    
    \path[->] (q0)  edge [bend left] node {$a$} (q1)
    edge [loop above] node {$b$} (q0)
    (q1) edge [bend left] node {$a$} (q0)
    edge [loop above] node {$b$} (q1);
    
  \end{tikzpicture}
  \\
  Teacher replies no and a counterexample, say, $ab\ll 1. \gg$. It is in $L$ not in $M$. And we go to step 3.
\end{minipage}
\textbf{Step 3}\\
Let $S_3\leftarrow S_2\cup \{a,ab, ab\ll 1.,ab\ll 1. \gg\}$, $E_3\leftarrow E_2$, and $n=1$, and then, the alphabet is extended to ${\Alp_1}=\Sigma\cup \underline{n}\cup\{\ll,\gg\}$. We should construct new observation tables sequentially through membership queries. Then we check the new table for closeness and consistency.\\
\begin{minipage}{0.5\textwidth}
  \begin{center}
    $T_3$=
    \begin{tabular}{l|l|l}
      
      {$\reg{\_}$}  &{}  &      {$\emptystring$}     \\
      \hline
      {0}  &${\emptystring}$ & {P} \\
      
      {0}  &{$b$} & {0} \\
      {0}  &{$a$} & {P}\\
      {0}  &{$ab$} & {P}\\
      {1}  & {$ab\ll 1.$} & {P} \\
      {0}  &{$ab\ll 1.\gg$} & {1} \\
      \hline
      {1}  &{$\ll 1.$} & {0} \\
      {0}  &{$a$} & {P} \\
      {0}  &{$ab\ll 1.\gg a$} & {0} \\
      {0}  &{$ab\ll 1.\gg b$} & {0} \\
      {1}  &{$ab\ll 1.1$} & {P} \\
      {1}  &{$ab\ll 1.a$} & {0} \\
      {$\cdots$}  &{$\cdots$} & {$\cdots$} \\
    \end{tabular}
  \end{center}
\end{minipage}
\begin{minipage}{0.5\textwidth}
  $(S_3,E_3,T_3,{\Alp}_1)$ consistent? \\
  No, $\row(ab)=\row(\emptystring)$ but $\row(ab\ll 1.)\neq \row(\ll 1.)$ . \\
  $(S_3,E_3,T_3,{\Alp}_1)$ closed? \\
   No,  $\row(\ll 1.)$ with $\reg{\ll 1.}=1$ has a fresh content. \\
\end{minipage}
\textbf{Step 4}\\
Let $S_4\leftarrow S_3\cup \{\ll 1. \}$,$E_4\leftarrow E_3\cup \{\ll 1.\}$, we should construct a new observation table $(S_4,E_4,T_4,{\Alp}_1)$ and check the new table for closeness and consistency.\\
\begin{minipage}{0.5\textwidth}
  \begin{center}
    $T_4$=
     \begin{tabular}{l|l|l|l}
    
    {$\reg{\_}$}  &{}  &      {$\emptystring$}   &      {$\ll 1.$} \\
    \hline
    {0}  &${\emptystring}$ & {P} & {0}\\
    
    {0}  &{$b$} & {0} & {0}\\
    {0}  &{$a$} & {P}& {0}\\
    {0}  &{$ab$} & {P}& {P}\\
    {1}  &{$ab\ll 1.$} & {P} & {$\bot$}\\
    {0}  & {$ab\ll 1.\gg$} & {1} & {0}\\	
    {1}  &{$\ll 1.$} & {0}& {$\bot$}\\
    \hline
    {0}  &	{$ba$} & {0}& {0}\\
    {0}  &	{$bb$} & {0}& {0}\\
    {1}  &	{$ab\ll 1.1$} & {P} & {$\bot$}\\
    {1}  &{$ab\ll 1.a$} & {0}  & {$\bot$}\\
    {$\cdots$}  &{$\cdots$} & {$\cdots$} & {$\cdots$}\\
  \end{tabular}
  \end{center}
\end{minipage}
\begin{minipage}{0.5\textwidth}
Once the table is closed and consistent, we ask an equivalence query.\\
Finally, the teacher replies \quo{yes} to an equivalence query about. The learning progress terminates. The learner automaton is as below.\\
\end{minipage}
\begin{tikzpicture}[>=stealth',shorten >=1pt,auto,node distance=15mm]
  \node[initial,state,inner sep=1pt,minimum size=0pt] (q0)      {$q_0$};
  \node[state,inner sep=1pt,minimum size=0pt]         (q1) [right of=q0]  {$q_1$};
  \node[state,inner sep=1pt,minimum size=0pt]         (q2) [right of=q1]  {$q_2$};
  \node[state,inner sep=1pt,minimum size=0pt]         (q3) [above of=q2]  {$q_3$};
  \node[state,accepting,inner sep=1pt,minimum size=0pt]         (q4) [right of=q2]  {$q_4$};
  \node[state,inner sep=1pt,minimum size=0pt]         (q5) [right of=q4]  {$q_5$};
  \node[state,inner sep=1pt,minimum size=0pt]         (q6) [right of=q3]  {$q_6$};
  \path[->] (q0)  edge node {$a$} (q1)
  edge[ bend right]  node[below] {$b$} (q5)
  
  (q1) edge[ bend right] node[below] {$a$} (q5)
  edge node {$b$} (q2)
  (q2) edge[ bend right]  node[below] {$a,b$} (q5)
  edge [ bend left] node {$\ll$} (q3)
  (q3) edge node {$a,b$} (q6)
  edge [loop above] node {$1$} (q3)
  edge [above] node {$\gg$} (q4)
  (q4) edge [bend right] node[below] {$a,b$} (q5)
  (q5) edge [loop right] node {$a,b$} (q5)
  (q6) edge [loop above] node {$a,b,1$} (q6)
  edge [bend left] node {$\gg$} (q5);
\end{tikzpicture}


\section{Conclusions}\label{sec:conc}
The learning algorithm $\lstar$ was introduced more than thirty years
ago and has been intensively extended to many types of models in
following years. This algorithm continues to attract the attention of
many researchers~\cite{Angluin17,SchroderKMW17,Moerman17}.

We designed a learning algorithm for a class of languages over
infinite alphabet; more precisely, we have considered nominal regular
languages with binders~\cite{Kurz0T12,Kurz0T13}.
We have tackled the finitary representations of the alphabets, words
and automata for retaining the basic scheme and ideas of
$\lstar$. Hence, we revised and added definitions for the nominal
words and automata.
Further, accounting for names and the allocation and deallocation
operations, we revised the data structures and notions in
$\lstar$. Accordingly, we have proposed the learning algorithm,
$\nlstar$, to stress the progress of learning a nominal language with
binders. We have proved the correctness and analysed the complexities
of $\nlstar$.

As for \lstar, a key factor for the effectiveness of and \nlstar is
the selection of counterexamples.
Due to possibly infinite number of candidate counterexamples, the
selection of the counterexamples is non-deterministic in \nlstar.
We are developing an implementation of $\nlstar$ to study effective
mechanisms to resolve this non-determinism.
Interestingly, the rich structure of nominal automata offer different
directions to solve this problem.
In fact, we started to investigate this issue and defined two
different strategies used by the teacher to generate counterexamples
based on the \quo{size} of the counterexamples and on the preference
of the teacher for counterexamples with maximal or minimal number of
fresh names.
Initial experiments show how the strategies impact on the
convergence of \nlstar.
This immediately suggest that \nlstar could be improved by designing
different strategies to generate \quo{better} counterexamples, that is
counterexamples that allow the learner to learn \quo{more quickly}.

\bibliographystyle{eptcs}
\bibliography{bib}

\newpage

\appendix


\end{document}